\newtheorem{theorem}{Theorem}
\newtheorem{lemma}[theorem]{Lemma}
\newtheorem{corollary}[theorem]{Corollary}
\newtheorem{definition}[theorem]{Definition}
\theoremstyle{definition}
\newtheorem*{remark}{Remark}
\DeclareMathOperator*{\argmax}{arg\,max}
\DeclareMathOperator*{\argmin}{arg\,min}
\newcommand{\Chi}{\mathrm{X}}
\begin{document}

\title{Quantum mean states are nicer than you think: fast algorithms to compute states maximizing average fidelity}

\author{A. Afham}
\affiliation{Centre for Quantum Software and Information, University of Technology Sydney, %\Ultimo, 
NSW 2007, Australia} 
\affiliation{Sydney Quantum Academy, Sydney, NSW 2000, Australia}
\email{afham@student.uts.edu.au}

\author{Richard Kueng}
\affiliation{Institute for Integrated Circuits, Johannes Kepler University Linz, %Altenbergerstrasse 69, 
4040 Linz, Austria} 

\author{Chris Ferrie}
\affiliation{Centre for Quantum Software and Information, University of Technology Sydney, %Ultimo, 
NSW 2007, Australia} 

\begin{abstract}
Fidelity is arguably the most popular figure of merit in quantum sciences. However, many of its properties are still unknown. In this work, we resolve the open problem of maximizing average fidelity over arbitrary finite ensembles of quantum states and derive new upper bounds. We first construct a semidefinite program whose optimal value is the maximum average fidelity and then derive fixed-point algorithms that converge to the optimal state. The fixed-point algorithms outperform the semidefinite program in terms of numerical runtime. We also derive expressions for near-optimal states that are easier to compute and upper and lower bounds for maximum average fidelity that are exact when all the states in the ensemble commute. Finally, we discuss how our results solve some open problems in Bayesian quantum tomography.

\end{abstract}

\maketitle

\section{Introduction}

Comparing quantum states is not only of great practical importance but also provides novel mathematical challenges since quantum states form a constrained set of complex-valued matrices. While some methods of comparison are now ubiquitous, still much is unknown about the quantities of interest derived from them. In this paper, we consider the most commonly used comparator --- namely, \textit{fidelity} --- and demonstrate both analytic and numerical algorithms for producing states which maximize averages of it over arbitrary finite ensembles. These algorithms can be straightforwardly applied, for example, in tomography, where optimal solutions had previously been lacking.

Concretely, suppose we are given a collection of quantum states $\{\rho_1, \ldots, \rho_n\}$ and tasked with the problem of producing the \textit{best} state to represent this ensemble. An intuitive notion of ``best'' is the state which is as close as possible to each state. The obvious tension is resolved by minimizing the \textit{average} distance to each state in the collection. Though it is not formally a distance in the mathematical sense, the most commonly used measure of ``closeness'' for quantum states is fidelity. The fidelity between two quantum states $\rho$ and $\sigma$  is defined as \cite{NielsenChuang}, \begin{equation}
\operatorname{F}(\rho,\sigma)  =  \operatorname{Tr} \left(\sqrt{\sigma^{1/2}\rho \sigma^{1/2}}\right).    
\end{equation}

\begin{figure*} 
    \center
    \includegraphics[scale = 0.635]{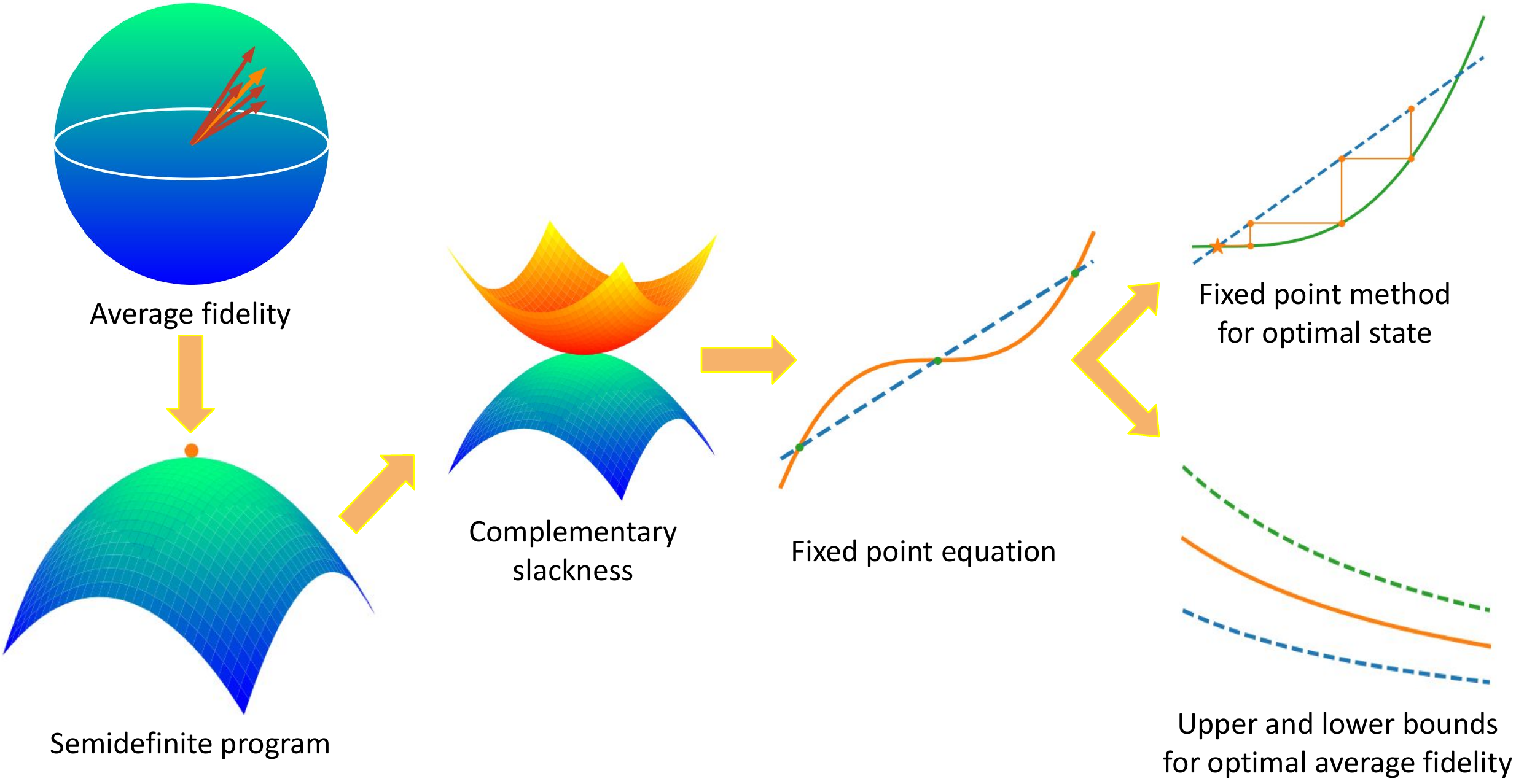}
    \caption{The problem of finding the state that maximizes the average fidelity is framed as a semidefinite program that exhibits complementary slackness relations. These relations lead to a fixed point equation satisfied by the optimal state from which we construct a fixed point iteration algorithm for the optimal state and heuristic near-optimal estimator which is optimal when all the states commute. Finally, we present upper bounds for optimal average fidelity achieved by any state.} \label{Fig:Flowchart}
\end{figure*}

The fidelity $\operatorname{F}(\rho,\sigma)$ between two quantum states can take on any value between $0$ and $1$, with $\operatorname{F}(\rho,\sigma) = 1$ if and only if $\rho = \sigma$ and $0$ when the states $\rho$ and $\sigma$ are orthogonal. More generally, each state $\rho_j$ in the collection can be associated with some probability (weight) $p_j$. Then, for any state $\sigma$, the \textit{average fidelity} is defined as,
\begin{equation}
    f(\sigma) = \sum_{j=1}^{n} p_j \operatorname{F}(\rho_j,\sigma).     
\end{equation}
Throughout this paper, we are concerned with \textit{maximizing} average fidelity through the following optimization problem,
\begin{equation} \label{eq:maximprob}
\begin{aligned}
\text{maximize : }& f(\sigma) = \sum_{j=1}^{n} p_j \operatorname{F}(\rho_j,\sigma),\\
\text{subject to : } 
&\sigma \geq 0,\; \operatorname{Tr}(\sigma) = 1. \\
\end{aligned}
\end{equation}
This is a well-posed convex optimization problem as we are maximizing average fidelity, a concave function, over the convex set of quantum states. The state $\sigma_\sharp$ which solves this optimization problem is referred to as the \textit{optimal state}.

This problem arises naturally in quantum state tomography, which is the task of estimating a quantum state that has produced a given set of measurement data. There are many frameworks for addressing the tomography problem and each has within them many procedures which generate estimates given data. Each is referred to as an \textit{estimator}, with the Maximum Likelihood Estimator (or MLE)~\cite{Hradil97} being a canonical example.

In evaluating a potential estimator, it is commonplace to run real or simulated experiments over a variety of state preparation procedures. Such randomization is desired to ensure unbiasedness, for example. The results of these experiments are the reported fidelities achieved by the estimator(s) \textit{averaged} over the ensemble chosen in the experiment. Crucially, there exists an \textit{optimal} estimator for this ensemble, where the probabilities are defined by the likelihood function (or Born rule). The average fidelity of the optimal estimator would provide an absolute benchmark in the evaluation of an estimator. Until now, no recipe to produce such a state has been provided.

There is a duality in the above discussion to Bayesian quantum tomography \cite{Blume2010BME}, where the optimal quantum state is that with maximum fidelity averaged over a posterior distribution. This state is typically called the \textit{Bayes estimator}. Hence, our work also solves this open Bayesian tomography problem. In particular, we provide iterative fixed point algorithms (Eq.~\eqref{eq:FPMaps}) which converge to the optimal state for any starting point. We provide an easy-to-compute approximation for the optimal state (Eq.~\eqref{eq:Approximator}) which is exact when all the states in the ensemble commute. We also present expressions for upper bounds on average fidelity (Eq.~\eqref{eq:informal_bounds}) that are of practical and theoretical interest. We complement our theoretical findings with numerical experiments comparing the performance and quality of our results in practical scenarios.

Related work in this direction provides bounds for optimal square fidelity~\cite{Kueng2015}, optimal square fidelity estimators for qubit states~\cite{Bagan2006}, Bayes estimators for fidelity and square fidelity restricted to diagonal density matrices~\cite{Ferrie2016}, minimization of average Bures distance over a finite ensemble of positive definite matrices~\cite{Bhatia2019}, and minimizing Rényi entropy via Mirror descent~\cite{You2021}. Our work improves upon or complements these and we hope our results can serve as a model for finding Bayes estimators of other figures of merit such as trace distance, square fidelity and so on.

The remainder paper is structured as follows. In Section~\ref{sec:Informal}, we outline the main results. In Section~\ref{sec:Prelim} we discuss the preliminaries and notations required for the rest of the article. In Section~\ref{sec:Results} we formally state the results and prove them. In Section~\ref{sec:applications} we briefly discuss the application of our theoretical results to the tomography problem and verify their utility through simulated experiments. In Section~\ref{Sec:Numerics} we provide numerical simulations relevant to our results, and we conclude in Section~\ref{Sec:Conclusion}.

\begin{figure*}
    \center
    \includegraphics[scale = 0.65]{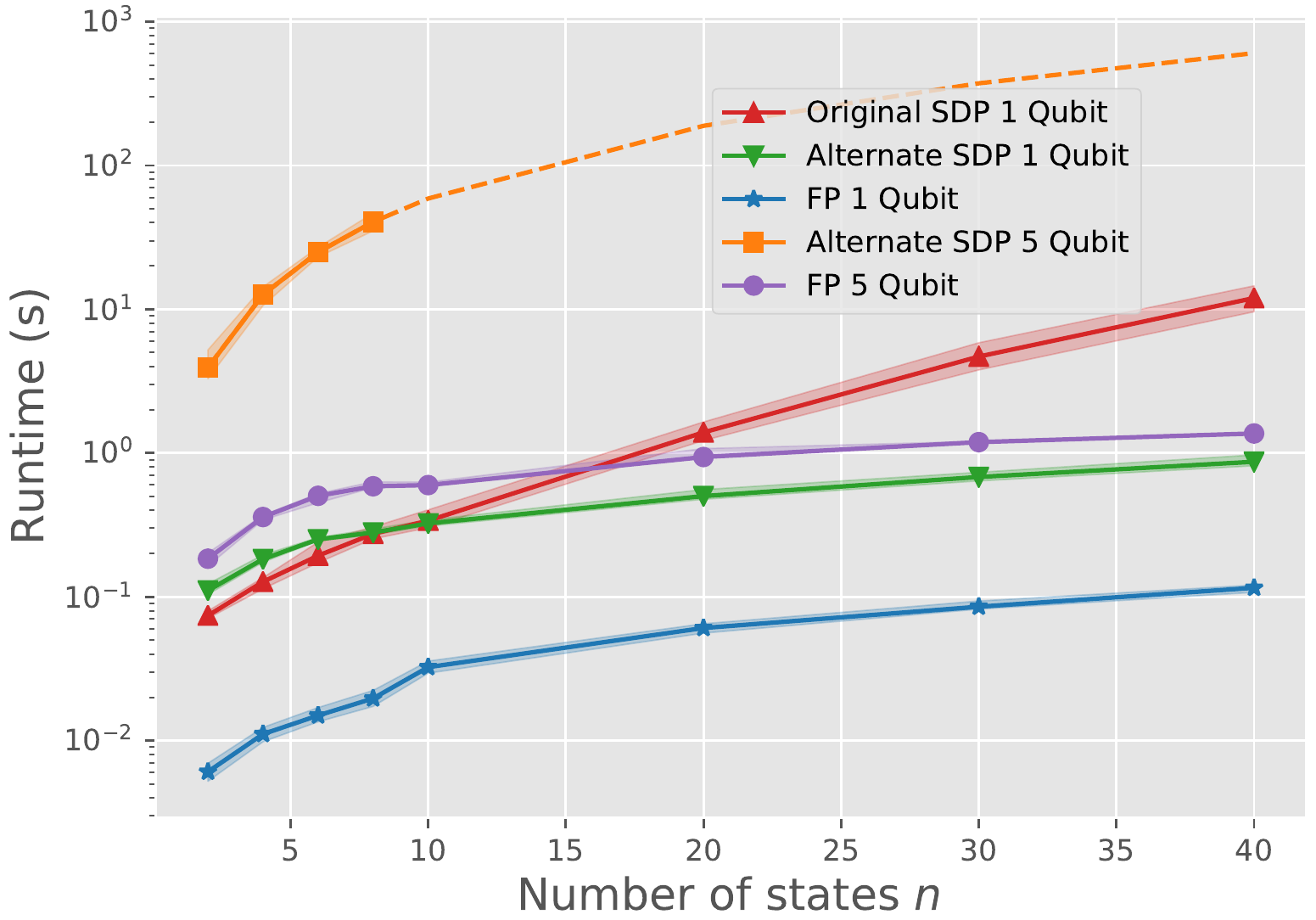}
    \caption{Performance comparison of SDP and FP algorithm for finding the state maximizing average fidelity. runtime plotted as a function of number of states for original SDP~\eqref{def:SDP}, alternate SDP~\eqref{eq:AltSDP}, and fixed point iteration algorithm~\eqref{eq:Omega} for two different dimensions. For 1 qubit states ($d = 2$), we plot the runtime of all three methods. For 5 qubit states ($d = 32$), we drop the original SDP and plot the alternate SDP only upto $n=8$ due to intractability. Each marker is the median of 50 iterations and shaded regions correspond to interquarile regions. Dashed line (for alternate SDP 5 qubit) correspond to extrapolation from numerical data. } \label{Fig:SDPvFP}
\end{figure*}

\section{Informal statement of results} \label{sec:Informal}

We first state the problem. Let $\mathcal H = \mathbb{C}^d$ be a $d$-dimensional Hilbert space and $\operatorname{D}(\mathcal H)$ denote the collection of all quantum states associated with the space $\mathcal H$:
\begin{equation}
    \operatorname{D}(\mathcal H) = \{\rho: \rho \in \operatorname{Pos}(\mathcal H), \operatorname{Tr}(\rho) = 1\},
\end{equation}
where $ \operatorname{Pos}(\mathcal H) $ is the collection of positive semidefinite operators in the space $\mathcal H$. We denote the standard simplex in $n$ dimensions, the collection of all $n$-dimensional vectors with non-negative entries summing up to unity, by $\Delta_n$:
\begin{equation}
    \Delta_n = \left\{p \in \mathbb{R}^n : p_i \geq 0, \sum_{i=1}^{n} p_i = 1 \right\}. 
\end{equation}
We call an element $p \in \Delta_n$ of the standard simplex a \textit{probability vector}. Given a collection of quantum states $\mathsf R = \{\rho_1, \ldots, \rho_n\} \subset \operatorname{D}(\mathcal H)$  and a probability vector $p \in \Delta_n$ over it, we define the average fidelity of any state $\sigma \in \operatorname{D}(\mathcal H)$  over the ensemble $(\mathsf R,p)$ as

\begin{equation}
    f(\sigma) = \sum_{i=1}^{n} p_i \operatorname{F}(\rho_i,\sigma).
\end{equation}
The optimization problem of interest is then
\begin{equation} \label{prob:maxavfid}
    \argmax_{\sigma \in \operatorname{D}(\mathcal H)}f(\sigma) = \argmax_{\sigma \in \operatorname{D}(\mathcal H)} \sum_{i=1}^{n} p_i\operatorname{F}(\rho_i,\sigma).
\end{equation}
We first present a semidefinite program (SDP) which solves Problem \eqref{prob:maxavfid} in Section~\ref{sec:OptFidSDP}. The SDP exhibits strong duality and, when all the states in the ensemble are full rank, complementary slackness. However, numerically solving SDPs can quickly grow intractable, especially since the SDP involves optimizing over matrices of dimension $(n+1)d$. This difficulty can be circumvented by the following observation. Complementary slackness of the SDP implies a fixed point equation that is satisfied by the optimal state:
\begin{equation}
    \sigma_\sharp = \frac1{f(\sigma_\sharp)} \sum_{i=1}^{n} p_i \sqrt{\sigma_\sharp^{1/2} \rho_i \sigma_\sharp^{1/2}}. 
\end{equation}
We use this to develop two fixed point iteration algorithms which converge to the optimal state for \emph{any} starting point. The two algorithms are defined by the two fixed point iteration maps $\Lambda$ and $\Omega$ of the form
\begin{equation} \label{eq:FPMaps}
\begin{aligned}
    \Lambda(\sigma) &= \Gamma \left(\sum_{i=1}^{n} p_i \sqrt{\sigma^{1/2} \rho_i \sigma^{1/2}}
 \right), \\
\Omega(\sigma) &= \Gamma \left( \sigma^{-1/2}  \left(
\sum_{i=1}^{n} p_i \sqrt{\sigma^{1/2} \rho_i \sigma^{1/2}}
 \right)^2 \sigma^{-1/2}  \right),
\end{aligned}    
\end{equation}
where $\Gamma(A) = A/\operatorname{Tr}(A)$ is used to normalise these positive definite matrices to density matrices. The sequence of states $\{\Lambda^k(\sigma)\}_{k=0}^\infty$ and $\{\Omega^k(\sigma)\}_{k=0}^\infty$ converges to the optimal state $\sigma_\sharp$ for any full rank initial state $\sigma \in \operatorname{D}(\mathcal H)$. Here we define $\Lambda^0 (\sigma) = \sigma$ and $\Lambda^k(\sigma) = \Lambda\left(\Lambda^{k-1}(\sigma)\right)$ for all integers $k \geq 1$. Similar notation is followed for the map $\Omega$.
We note that the fixed point method is guaranteed to work when all the states in the ensemble are full-rank. If one is interested in optimizing over rank-deficient states, first depolarize the states by a small factor to obtain full-rank states and then use the fixed point algorithms. Alternatively, we may simply use the SDP which yields the optimum even when the states are rank-deficient.

When all the states in $\mathsf R = \{\rho_i\}_{i=1}^n$ pairwise commute, then there exists a simple analytic expression for the optimal state $\sigma_\sharp$:
\begin{equation} \label{eq:Approximator}
    \sigma_\sharp = \sigma' = \Gamma \left( \left( \sum_{i=1}^n p_i \rho_i^{1/2}\right)^2\right). 
\end{equation}
The state $\sigma'$, called the \textit{Commuting estimator}, can also serve as an easy-to-compute near-optimal heuristic approximation even in cases where the ensemble does not commute. Numerically we see this approximation to be quite good, better than the Mean estimator, especially when the states are close to each other, which is the case in Bayesian tomography.

We also present analytic upper bounds for the maximum value of the average fidelity $f(\sigma)$ for any state $\sigma \in \operatorname{D}(\mathcal H)$:
\begin{equation}    \label{eq:informal_bounds}
    f(\sigma ) \leq \sqrt{\sum_{i,j=1}^{n} p_i p_j \operatorname{F}(\rho_i,\rho_j)} \leq \sqrt{f(\sigma_\text{M})},
\end{equation}
where $\sigma_\text{M} = \sum_{i=1}^{n} p_i \rho_i$ is the mean of the distribution $(\mathsf R,p)$. We call these bounds \textit{Product bound} and \textit{Average bound} respectively.

Note that the average fidelity $f(\sigma')$ of the Commuting estimator $\sigma'$ is a lower bound on the maximum average fidelity. Bringing the Product bound into the picture, we have lower and upper bounds for optimal average fidelity $f(\sigma_\sharp)$:
\begin{equation}
    f(\sigma') \leq f(\sigma_\sharp) \leq \sqrt{\sum_{i=1}^{n} p_i p_j \operatorname{F}(\rho_i,\rho_j) }.
\end{equation}
Both the upper and lower bounds coincide with optimal average fidelity when all the states in the ensemble commute pairwise.

\section{Preliminaries} \label{sec:Prelim}

In this section, we introduce the mathematical preliminaries and notations used in our work. We use uppercase calligraphic letters $\mathcal H, \mathcal X, \mathcal Y,\mathcal Z$ to denote complex Euclidean (Hilbert) spaces and uppercase serif letters $\mathsf A, \mathsf B, \mathsf R$ to denote sets. By $\mathcal X = \mathbb{C}^d$, we mean that $\mathcal X$ is a $d$-dimensional Hilbert space. We use $\operatorname{L}(\mathcal X, \mathcal Y)$  to denote linear operators from Hilbert spaces $\mathcal X$ to $\mathcal Y$ and $\operatorname{L}(\mathcal X)$ to denote linear operators from $\mathcal X$  to $\mathcal X$. Notations for other specific classes of matrices are shown Table~\ref{tab:notation}.

% We use $\operatorname{Herm}(\mathcal X), \operatorname{Pos}(\mathcal X), \operatorname{Pd}(\mathcal X ), \operatorname{U}(\mathcal X)$  and $\operatorname{D}(\mathcal X)$  to denote the set of all Hermitian, positive semidefinite, positive definite, unitary, and density matrices on the space $\mathcal X$ respectively. 
\begin{table}
\centering
\begin{tabular}{|l|l|}
\hline
\textbf{symbol} & \textbf{definition} \\
\hline \hline
$\operatorname{Herm}(\mathcal X )$ & \emph{Hermitian} matrices on $\mathcal{X}$ (i.e. $X^*=X$), \\
%\hline
$\operatorname{Pos}(\mathcal X )$ & \emph{positive semidefinite} matrices on $\mathcal{X}$ (i.e.\ $X \geq 0$), \\
%\hline
$\operatorname{Pd}(\mathcal X )$ & \emph{positive definite} matrices on $\mathcal{X}$ (i.e.\ $X >0$), \\
%\hline
$\operatorname{D}(\mathcal X )$ & \emph{density} matrices on $\mathcal{X}$ (i.e.\ $\rho \geq 0$ and $\operatorname{Tr}(\rho)=1$),\\
%\hline
$\operatorname{U}(\mathcal X )$ & \emph{unitary} matrices on $\mathcal{X}$ (i.e.\ $UU^*=U^*U=\mathds{1}_{\mathcal{X}}$). \\
\hline
\end{tabular}
%\begin{tabular}{l | r}
%Class of matrices on the space $\mathcal X$ & Notation \\
%\hline
%    % Square & $\operatorname{L}(\mathcal X)$ \\
%    Hermitian & $\operatorname{Herm}(\mathcal X )$ \\ 
%    Positive semidefinite & $\operatorname{Pos}(\mathcal X )$ \\ 
%    Positive definite & $\operatorname{Pd}(\mathcal X )$ \\ 
%    Density & $\operatorname{D}(\mathcal X )$ \\
%    Unitary & $\operatorname{U}(\mathcal X )$ \\     
%\end{tabular}.
\caption{Notational conventions regarding different types of matrices. Recall that $\mathcal{X}=\mathbb{C}^d$ for some dimension $d$ (e.g. $d=2^t$ for $t$ qubits).}
\label{tab:notation}
\end{table}

When the identity of the underlying space is not important, we use $P \geq 0$ to denote the matrix $P$ is positive semidefinite. Similarly we use $P > 0$ to denote $P$ is positive definite. For any positive semidefinite matrix $P \geq 0$, we use the notation $P^{1/2}$ or $\sqrt{P}$ to denote the unique positive semidefinite matrix such that $P^{1/2}P^{1/2} = P$. We also note that a matrix $P$ is positive semidefinite if and only if it can be written as $B^*B$ for some matrix $B$. For two matrices $A,B$, the statement $A \geq B$ is equivalent to $A - B \geq 0$. Similarly, $A > B$ is equivalent to $A - B > 0$.

Two norms we will frequent upon are the \textit{trace norm} and the \textit{spectral norm}, both of which come under a class of matrix norms called Schatten $p$-norms~\cite{Watrous2018}. For any operator $A \in \operatorname{L}(\mathcal X, \mathcal Y)$, the trace norm is defined as
\begin{equation}
    \|A\|_{1} = \operatorname{Tr}\left(\sqrt{A^*A} \right),
\end{equation}
while the spectral norm is defined as
\begin{equation}
    \|A\| = \max \{\|Au\| : u \in \mathcal X , \|u \| \leq 1\}.
\end{equation}
Any operator $A$ with sub-unit spectral norm, $\|A\| \leq 1$, is called a \textit{contraction}. 

\subsection{Polar decomposition}
Throughout this paper, we will use polar decompositions of full rank matrices. Polar decomposition allows us to write an any arbitrary matrix as the product of a unitary and a positive (semi)definite matrix. Let $A \in \operatorname{L}(\mathcal X) $ be an arbitrary full rank square matrix. Then the (right) polar decomposition of $A$ is given by
\begin{equation} \label{eq:rightPD}
    A = U |A|
\end{equation}
where $U \in \operatorname{U}(\mathcal H)$ is a unitary matrix and $|A| = \sqrt{A^*A} \in \operatorname{Pd}(\mathcal H)$ is positive definite. Equivalently, the left polar decomposition can be written as
\begin{equation} \label{eq:leftPD}
    A = |A^*|U.
\end{equation}
Note that the $U$ in \eqref{eq:rightPD} and \eqref{eq:leftPD} are the same. $U$ is unique when $A$ is full rank.

In this article, we will frequent upon cases where $A$ has the special form of being the product of two positive definite matrices: $A = PQ$ where $P,Q \in \operatorname{Pd}(\mathcal H)$. Let $A = \rho^{1/2} \sigma^{1/2}$ for full rank states $\rho, \sigma \in \operatorname{D}(\mathcal H)$. We know that 
\begin{equation}
    %\operatorname{F}(\rho,\sigma) = \|A\|_1 = \operatorname{Tr}\left(\sqrt{A^*A} \right) = \operatorname{Tr} \left(\sqrt{\sigma^{1/2} \rho \sigma ^{1/2}} \right).
    \operatorname{F}(\rho,\sigma) = \operatorname{Tr} \left(\sqrt{\sigma^{1/2} \rho \sigma ^{1/2}} \right)=\operatorname{Tr}\left(\sqrt{A^*A} \right) = \|A\|_1.
\end{equation}
The variational property of trace norm states that for any matrix $M$, 
\begin{equation}  \label{eq:VarTraceNorm}
    \max_{U: \|U\| \leq 1}|\operatorname{Tr}(UM)| = \operatorname{Tr}(|M|),
\end{equation}
where $U$ is varied over all contractions, i.e., $\|U\| \leq 1$. The unitary $U$ that saturates the inequality is the inverse of the unitary factor from the polar decomposition of $M$.
Using this property, we have
\begin{equation}
    \max_{\|U\| \leq 1}\operatorname{Tr}\left(\sigma^{1/2} \rho^{1/2}U \right) = \operatorname{F}(\rho,\sigma). 
\end{equation}
The optimal operator $U$ is the inverse of the unitary factor from the polar decomposition of $\rho^{1/2} \sigma^{1/2}$, that is, $U = V^*$ where
\begin{equation} \label{eq:UnitaryFactor}
    \rho ^{1/2} \sigma^{1/2} = V \left|\rho ^{1/2} \sigma^{1/2}\right|.
\end{equation}
At the risk of being obvious, but due to the ubiquity of this relation in the following sections, we note that \begin{equation} \label{eq:PolDecomp}
    \left|\rho ^{1/2} \sigma^{1/2}\right| = \sqrt{\sigma^{1/2} \rho \sigma^{1/2} } = U\rho^{1/2} \sigma^{1/2}  = \sigma^{1/2}\rho^{1/2} U^*.
\end{equation}

\subsection{Semidefinite programs}

Semidefinite programming is a field of convex optimization~\cite{Boyd2004,Barvinok2002}, where we are concerned with optimizing a linear function over the intersection of the cone positive semidefinite matrices with an affine space. Formally, a semidefinite program (SDP) is specified by a triple $(\Phi, A,B)$~\cite{Watrous2018} where $\Phi: \operatorname{L}(\mathcal X)  \to \operatorname{L}(\mathcal Y)$  is a Hermitian preserving linear map between Hilbert spaces $\mathcal X$  and $\mathcal Y$  while $A \in \operatorname{Herm}(\mathcal X)$  and $B \in \operatorname{Herm} (\mathcal Y)$ are Hermitian operators.

Given a triple $(\Phi, A, B)$ of the above form, we may associate two optimization problems with it, which we call the \textit{primal} and \textit{dual problems}. 

\begin{equation}
\begin{aligned}
&\text{Primal problem }\\
\text{maximize : }&\langle A,X\rangle,\\
\text{subject to : } 
&\Phi(\operatorname{X}) = B, \\
&X \in \operatorname{Pos}(\mathcal X).\\
\end{aligned}
\end{equation}

\begin{equation}
\begin{aligned}
& \text{Dual problem}\\
\text{minimise : } &\langle B,Y\rangle,\\
\text{subject to : } 
&\Phi^*(\operatorname{Y}) \geq A, \\
&Y \in \operatorname{Herm}(\mathcal Y).\\
\end{aligned}    
\end{equation}

The sets of all operators that satisfy the respective constraints are called \textit{primal feasible} set $\mathsf A$ and \textit{dual feasible} set $\mathsf B$:
\begin{equation}
\begin{aligned}
\mathsf A &= \{X \in \operatorname{Pos}(\mathcal{X}) : \Phi(\operatorname{X}) = B\}, \\
\mathsf B &= \{Y \in \operatorname{Herm}(\mathcal Y) : \Phi^*(\operatorname{Y}) \geq A \}.
\end{aligned}
\end{equation}
The \textit{primal optimum} $\alpha$ and \textit{dual optimum} $\beta$ are then defined as 
\begin{equation}
\begin{aligned}
    \alpha &= \sup \{ \langle A, X \rangle : X \in \mathsf A\}, \\ 
    \beta &= \inf \{\langle B,Y \rangle : Y \in \mathsf B\}. 
\end{aligned}
\end{equation}
In the case that $\mathsf A = \varnothing$   or $\mathsf B = \varnothing$, we say that $\alpha = -\infty$ or $\beta  = \infty$ respectively.

With semidefinite programs, there exist certain notions of duality, which manifest as \textit{weak duality} and \textit{strong duality}. The property of weak duality, which holds for all semidefinite programs, is that the primal optimum is always bounded above by the dual optimum: $\alpha \leq \beta$.  Strong duality describes the situation where the inequality is saturated~(see Fig. \ref{Fig:Flowchart}, subfigure 3 for an illustration). Note that not all SDPs admit strong duality. Slater's condition is a sufficient condition for strong duality. 

\begin{theorem} [Slater's theorem for semidefinite programs \cite{Watrous2018}] Let $\mathcal X$  and $\mathcal Y$  be complex Euclidean spaces, let $\Phi: \operatorname{L}(\mathcal X)\to \operatorname{L}(\mathcal Y )$  be a Hermitian-preserving map, and let $A \in \emph{Herm}(\mathcal X)$  and $B \in \emph{Herm}(\mathcal Y )$  be Hermitian operators. Letting $\mathsf A,\mathsf B, \alpha,$ and $\beta$ be defined as above for the semidefinite program $(\Phi, A,B)$, the following two implications hold true:
\begin{enumerate}
    \item If $\alpha$ is finite and there exists a Hermitian operator $Y \in \emph{Herm}(\mathcal Y)$  such that $\Phi^*(\operatorname{Y}) > A$, then $\alpha=\beta$, and moreover there exists a primal-feasible operator $X \in \mathsf A$  such that $\langle A,X\rangle = \alpha$.
\item If $\beta$ is finite and there exists a positive definite operator $X \in \emph{Pd}(\mathcal X )$ such that $\Phi(\operatorname{X})=B$, then $\alpha= \beta$, and moreover there exists a dual-feasible operator $Y \in \mathsf B$ such that $\langle B,Y\rangle = \beta$.
\end{enumerate}
\end{theorem}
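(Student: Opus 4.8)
The plan is to treat the two implications as mirror images of a single separating-hyperplane argument, proving one directly and obtaining the other from the primal--dual symmetry of semidefinite programs. Before anything else I would record weak duality, $\alpha \le \beta$: for $X \in \mathsf A$ and $Y \in \mathsf B$ one has $\langle A, X\rangle \le \langle \Phi^*(Y), X\rangle = \langle Y, \Phi(X)\rangle = \langle B, Y\rangle$, where the inequality uses that $\Phi^*(Y) - A \ge 0$ and $X \ge 0$ have nonnegative Hilbert--Schmidt inner product. This yields $\alpha \le \beta$ and is the tool that upgrades a one-sided bound into an equality at the end.

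For the second implication I would work in $\operatorname{Herm}(\mathcal Y) \oplus \mathbb R$ and consider the convex cone of achievable constraint--objective pairs $\Omega = \{(\Phi(X), \langle A, X\rangle) : X \ge 0\}$. Strict primal feasibility together with weak duality forces $\alpha$ finite (since $\beta$ is finite), and by definition of $\alpha$ the ray $R = \{(B, t) : t > \alpha\}$ is disjoint from $\Omega$. Separating these two disjoint nonempty convex sets in finite dimension yields a nonzero functional $(Z, \mu)$ with $\langle Z, \Phi(X)\rangle + \mu\langle A, X\rangle \le \langle Z, B\rangle + \mu t$ for all $X \ge 0$ and $t > \alpha$. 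Since $\Omega$ is a cone, the left-hand side must be $\le 0$ for every $X \ge 0$, which gives $\Phi^*(-Z) \ge \mu A$; boundedness of the right-hand side as $t \to \infty$ forces $\mu \ge 0$, and letting $t \downarrow \alpha$ gives $\langle Z, B\rangle + \mu\alpha \ge 0$.

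The crux is to rule out the degenerate multiplier $\mu = 0$, and this is exactly where Slater's condition is spent. If $\mu > 0$, then $Y = -Z/\mu$ satisfies $\Phi^*(Y) \ge A$ and $\langle B, Y\rangle \le \alpha$, which combined with weak duality gives $\langle B, Y\rangle = \alpha = \beta$ --- strong duality together with attainment of the dual optimum. If instead $\mu = 0$, then $W = -Z \ne 0$ obeys $\Phi^*(W) \ge 0$ and $\langle W, B\rangle \le 0$; evaluating against a strictly feasible $\hat X > 0$ with $\Phi(\hat X) = B$ gives $0 \ge \langle W, B\rangle = \langle \Phi^*(W), \hat X\rangle \ge 0$, so $\Phi^*(W) = 0$. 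After the standard reduction to surjective $\Phi$ (restrict $\mathcal Y$ to $\operatorname{range}(\Phi)$, legitimate since primal feasibility gives $B \in \operatorname{range}(\Phi)$), $\Phi^*$ is injective and hence $W = 0$, a contradiction. Thus $\mu > 0$ and the second implication follows.

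Finally I would obtain the first implication by applying the established result to the dual program viewed as a primal: recast $\min \langle B, Y\rangle$ subject to $\Phi^*(Y) \ge A$, $Y \in \operatorname{Herm}(\mathcal Y)$, into standard primal form (introduce the slack $S = \Phi^*(Y) - A \ge 0$ and split the free variable $Y$), whereupon the strict dual feasibility $\Phi^*(\hat Y) > A$ plays the role of strict primal feasibility and dual attainment becomes primal attainment. I expect the main obstacle to be precisely the degenerate-hyperplane case above: making the surjectivity reduction airtight and verifying that strict feasibility genuinely excludes $\mu = 0$ is the entire content of the theorem, whereas the separation step itself and the bookkeeping in the dual reformulation are routine.
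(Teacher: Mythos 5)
The first thing to note is that the paper does not prove this statement at all: it is imported as background material from Watrous's book \cite{Watrous2018}, and the paper only invokes it later (for Theorem~\ref{Th:Slater}). So there is no in-paper proof to compare against, and your proposal has to be judged on its own merits. On those merits, your plan is the standard and correct route: weak duality via $\langle A,X\rangle \le \langle \Phi^*(Y),X\rangle = \langle B,Y\rangle$; separation of the convex cone $\{(\Phi(X),\langle A,X\rangle): X\ge 0\}$ from the ray $\{(B,t): t>\alpha\}$; the Slater point used to exclude the degenerate multiplier $\mu=0$; and the recasting of the dual as a primal (slack $S=\Phi^*(Y)-A$ plus splitting $Y=Y_+-Y_-$) to transfer attainment to the first implication. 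I checked the recasting bookkeeping: the dual of your recast program is the original primal with optimum $-\alpha$, and $Y_+=\hat Y+c\mathds{1}$, $Y_-=c\mathds{1}$ (large $c$), $S=\Phi^*(\hat Y)-A>0$ gives the strictly feasible block-diagonal point needed to apply implication 2, so that half is genuinely routine.

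One step does need repair, and it is the one you flagged: the reduction to $\operatorname{range}(\Phi)$ must be performed \emph{before} the separation, not invoked after the functional $(Z,\mu)$ is in hand. If you separate in the full space $\operatorname{Herm}(\mathcal Y)\oplus\mathbb{R}$, then any nonzero $W$ orthogonal to $\Phi(\operatorname{Herm}(\mathcal X))$ yields, via $(Z,\mu)=(-W,0)$, a perfectly legitimate (improper) separating functional: both sides evaluate to zero, using $B=\Phi(\hat X)\in\operatorname{range}(\Phi)$. For such an output, all of your deductions ($\Phi^*(W)=0$, $\langle W,B\rangle\le 0$, $W\neq 0$) hold simultaneously and no contradiction exists, so the $\mu=0$ case cannot be closed. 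The fix is exactly the reduction you name, applied at the outset: work throughout in the real inner-product space $\mathcal V=\Phi(\operatorname{Herm}(\mathcal X))$, so the separation is carried out in $\mathcal V\oplus\mathbb{R}$ (both the cone and the ray live there); note $\mathcal V$ is a subspace of $\operatorname{Herm}(\mathcal Y)$, not of the form $\operatorname{Herm}(\mathcal Y')$, so ``restrict $\mathcal Y$'' is a slight abuse of language. Then the separating functional satisfies $Z\in\mathcal V$, and in the degenerate case $\Phi^*(W)=0$ means $W\perp\mathcal V$ while $W\in\mathcal V$, hence $W=0$, the desired contradiction. With that reordering, your argument is complete and is, in substance, the textbook proof the paper silently relies on.
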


One must note that though the satisfaction of either of the above two statements implies strong duality $\alpha = \beta$, only if both are satisfied can we say that there exist feasible operators $X$ and $Y$ achieving this optimum value. In the case where there exists such optimal primal and dual feasible operators $X$ and $Y$ with $\langle A,X \rangle  = \alpha = \beta = \langle B, Y\rangle $, there exists a certain relation between these two operators namely \textit{complementary slackness}. 

\begin{theorem}
[Complementary slackness for semidefinite programs \cite{Watrous2018}] Let $\mathcal X$  and $\mathcal Y$  be complex Euclidean spaces, let $\Phi: \operatorname{L}(\mathcal X)\to \operatorname{L}(\mathcal Y )$ be a Hermitian-preserving map, and let $A \in \emph{Herm}(\mathcal X)$  and $B \in \emph{Herm}(\mathcal Y)$  be Hermitian operators. Let $\mathsf A$  and $\mathsf B$  be the primal-feasible and dual-feasible sets associated with the semidefinite program $(\Phi, A, B)$, and suppose that $X \in \mathsf A$  and $Y \in \mathsf B$  are operators satisfying $\langle A,X \rangle = \langle B,Y\rangle$. It holds that
\begin{equation}
\Phi^*(Y)X = AX \quad \text{and} \quad \Phi(X) Y = BY.
\end{equation}
\end{theorem}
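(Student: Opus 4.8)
The plan is to distill the single scalar hypothesis $\langle A,X\rangle = \langle B,Y\rangle$ into an operator identity by exploiting positivity. First I would invoke the defining property of the adjoint, $\langle \Phi(X),Y\rangle = \langle X,\Phi^*(Y)\rangle$, together with primal feasibility $\Phi(X)=B$, to rewrite the dual objective as $\langle B,Y\rangle = \langle \Phi(X),Y\rangle = \langle X,\Phi^*(Y)\rangle$. Because $Y$ is Hermitian and $\Phi$ (hence $\Phi^*$) is Hermitian-preserving, every operator in sight is Hermitian, so these Hilbert--Schmidt inner products reduce to real traces. Substituting into the hypothesis and rearranging therefore gives
\begin{equation*}
    \operatorname{Tr}\bigl((\Phi^*(Y)-A)X\bigr) = 0 .
\end{equation*}

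The crux is a standard positivity lemma: if $P,Q \geq 0$ and $\operatorname{Tr}(PQ)=0$, then $PQ = QP = 0$. I would establish it by cyclicity of the trace, writing $\operatorname{Tr}(PQ) = \operatorname{Tr}(P^{1/2}QP^{1/2})$, noting that $P^{1/2}QP^{1/2} \geq 0$, and using the elementary fact that a positive semidefinite matrix of vanishing trace must itself be zero; this forces $P^{1/2}QP^{1/2}=0$, whence $Q^{1/2}P^{1/2}=0$ and finally $PQ=0$. Applying the lemma with $P = \Phi^*(Y)-A \geq 0$ (dual feasibility) and $Q = X \geq 0$ (primal feasibility) yields $(\Phi^*(Y)-A)X = 0$, which is precisely the first complementary slackness relation $\Phi^*(Y)X = AX$.

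The second relation $\Phi(X)Y = BY$ is then immediate: the primal constraint is the \emph{equality} $\Phi(X)=B$, so right-multiplication by $Y$ gives the identity directly for any primal-feasible $X$. I expect the only genuine obstacle to be the positivity lemma, and within it the passage from the scalar condition $\operatorname{Tr}(P^{1/2}QP^{1/2})=0$ to the operator conclusion $PQ=0$; the remainder is routine manipulation of the adjoint identity and the equality constraint.
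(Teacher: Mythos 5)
Your proof is correct. Note that the paper itself gives no proof of this statement—it is quoted as a background result with a citation to Watrous—so there is nothing in-paper to compare against; your argument is exactly the standard textbook one: use the adjoint identity and primal feasibility to write $\langle B,Y\rangle=\langle\Phi(X),Y\rangle=\langle\Phi^*(Y),X\rangle$, deduce $\operatorname{Tr}\bigl((\Phi^*(Y)-A)X\bigr)=0$, and apply the positivity lemma ($P,Q\geq 0$ and $\operatorname{Tr}(PQ)=0$ imply $PQ=0$) to obtain $\Phi^*(Y)X=AX$, while the second relation $\Phi(X)Y=BY$ is immediate since the primal constraint is the equality $\Phi(X)=B$. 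Your treatment of the lemma—passing from $\operatorname{Tr}(P^{1/2}QP^{1/2})=0$ to $P^{1/2}QP^{1/2}=0$, then to $Q^{1/2}P^{1/2}=0$ and $PQ=0$—is sound and correctly isolates the only nontrivial step.
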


\subsection{Semidefinite program for fidelity} \label{sec:WatrousSDP}

The fidelity $\operatorname{F}(P,Q)$  between two states (or more generally between two arbitrary positive semidefinite matrices) $P, Q \in \operatorname{D}(\mathcal H)$  can be formulated as a semidefinite program $(\Phi, A, B)$, see e.g.\ \cite[Theorem~3.17]{Watrous2018}. % as Watrous describes in Theorem 3.17 of \cite{Watrous2018}. 
This is a restatement of the variational characterization of trace norm as a semidefinite program. For $\mathcal X = \mathcal Y = \mathcal H \oplus \mathcal H$, define the Hermitian matrices $A$ and $B$ as
\begin{equation}
    A = \frac12 \begin{pmatrix}
0 & \mathds{1}_\mathcal X  \\
\mathds{1}_\mathcal X & 0
\end{pmatrix} \in \operatorname{Herm}(\mathcal X)  \quad \text{and} \quad 
B = \frac12
\begin{pmatrix}
P & 0 \\
0 & Q
\end{pmatrix} \in \operatorname{Herm}(\mathcal Y),
\end{equation}
and the Hermitian preserving map $\Phi$ as
\begin{equation}
    \Phi \begin{pmatrix}
C & \cdot \\
\cdot & D
\end{pmatrix} = 
\frac12 \begin{pmatrix}
C & 0 \\
0 & D
\end{pmatrix}. 
\end{equation}
Here $\cdot$ represents arbitrary block sub-matrices that are zeroed out by the action of $\Phi$. The semidefinite program $(\Phi,A,B)$ as defined above achieves strong duality and has the optimal value $\alpha = \beta = \operatorname{F}(P,Q)$. The primal optimization problem can be written as
\begin{equation} \label{SDP:FidSDP}
\begin{aligned}
&\text{Primal problem}\\
\text{maximize : } & \frac12 \operatorname{Tr}(X) + \frac12 \operatorname{Tr}(X^*) = \Re(\operatorname{Tr}(X))\\
\text{subject to : } 
& \begin{pmatrix}
    P & X \\
    X^* & Q
\end{pmatrix} \geq 0. 
\end{aligned}
\end{equation}
Here we use $\Re$ to denote the real part of the argument. This SDP always satisfies the first of Slater's conditions. When the states $P,Q$  are full rank, both of Slater's conditions hold. If both of Slater's conditions are satisfied, complementary slackness necessarily follows. The following relation between the constraint is leveraged in the construction of the above SDP~\cite[Lemma~3.18]{Watrous2018}, \cite[Proposition~1.3.2]{BhatiaPD}:
\begin{equation} \label{eq:SDPContraction}
    \begin{pmatrix}
    P & X \\
    X^* & Q
\end{pmatrix} \geq 0 \iff X = P^{1/2}UQ^{1/2} : U \in \operatorname{L}(\mathcal H),  \|U\| \leq 1. 
\end{equation}
It follows from the variational property of the trace norm \eqref{eq:UnitaryFactor} that the operator $U = V^*$ where $V$ is the unitary factor from the polar decomposition of the matrix $Q^{1/2} P^{1/2}$:
\begin{equation}
    Q ^{1/2} P ^{1/2} = V \left| Q ^{1/2} P ^{1/2} \right|. 
\end{equation}

\section{Formal statement of results} \label{sec:Results}

\subsection{SDP for optimal average fidelity} \label{sec:OptFidSDP}
The semidefinite program for optimal average fidelity is a generalization of Watrous's SDP for fidelity. We begin with formally defining the SDP whose primal optimum is the optimal average fidelity \eqref{eq:maximprob}. Moreover, this SDP also provides the optimal state.
\begin{definition}[SDP for optimal average fidelity] \label{def:SDP}
Let $\mathsf R = \{\rho_1, \ldots,\rho_n\}  \subset \operatorname{D}(\mathcal H)$ be a collection of quantum states and $p \in \Delta_n$ be a probability vector over $\mathsf R$. Let $\mathcal X  = \bigoplus_{i=1}^{n+1} \mathcal H$ and $\mathcal Y = \bigoplus_{i=1}^n \mathcal H \oplus \mathbb{C}$ be Hilbert spaces with dimensions $(n+1)d$ and $nd + 1$ respectively. Define the semidefinite program $(\Phi, A, B)$ as
\begin{equation} \label{eq:BayesSDPAB}
    \begin{aligned}
    A  = \frac12 
    \begin{pmatrix}
        0 &\cdots & 0 &p_1 \mathds{1}_\mathcal H  \\
        \vdots & \ddots & \vdots & \vdots \\
        0 & \cdots & 0 & p_n \mathds{1}_\mathcal H\\
        p_1\mathds{1}_\mathcal H  & \cdots &p_n \mathds{1}_\mathcal H & 0 
    \end{pmatrix}  \in \operatorname{Herm} (\mathcal X), \\     
    B = 
    \begin{pmatrix}
        \rho_1 & 0 &\cdots & 0 \\
        0 & \ddots & &\vdots  \\
        \vdots && \rho_n & 0\\
        0 &\cdots & 0 & 1
    \end{pmatrix} \in \operatorname{Herm}(\mathcal Y),
    \end{aligned}
    \end{equation}
    and the Hermitian preserving map $\Phi : \operatorname{L}(\mathcal X) \to \operatorname{L}(\mathcal Y)$ which acts as 
    \begin{equation}
    \Phi\begin{pmatrix}
    P_1 & \cdot &\cdots & \cdot \\
    \cdot & \ddots & &\vdots  \\
    \vdots && P_n & \cdot\\
    \cdot &\cdots & \cdot & Q
    \end{pmatrix} =  \begin{pmatrix}
    P_1 & 0 &\cdots & 0 \\
    0 & \ddots & &\vdots  \\
    \vdots && P_n & 0\\
    0 &\cdots & 0 & \operatorname{Tr}(Q) 
    \end{pmatrix}.
    \end{equation}
\end{definition}
The map $\Phi$ acts like the identity on the $d \times d$ block diagonal submatrices, except the last $d\times d$ submatrix which is traced. Every other element is zeroed out. The adjoint map (of $\Phi$) $\Phi^* : \operatorname{L}(\mathcal Y) \to \operatorname{L}(\mathcal X)$ has its action defined as
\begin{equation}
    \Phi\begin{pmatrix}
    P_1 & \cdot &\cdots & \cdot \\
    \cdot & \ddots & &\vdots  \\
    \vdots && P_n & \cdot\\
    \cdot &\cdots & \cdot & q
    \end{pmatrix} =  \begin{pmatrix}
    P_1 & 0 &\cdots & 0 \\
    0 & \ddots & &\vdots  \\
    \vdots && P_n & 0\\
    0 &\cdots & 0 & q \mathds{1}_\mathcal H 
    \end{pmatrix},
\end{equation}
where $q \in \mathbb{C}$.

% \begin{theorem}
% Given a collection of quantum states $\mathsf R = \{\rho_1, \ldots,\rho_n\}  \subset \operatorname{D}(\mathcal H)$  and a probability vector $p \in \Delta_n$ over $\mathsf R$, there exists a semidefinite program whose primal optimum is the optimal average fidelity 
% \begin{equation}\max_{\rho \in \operatorname{D}(\mathcal X) } \sum_{i=1}^{n} p_i \operatorname{F}(\rho_i,\rho).\end{equation}
% Moreover, the semidefinite program also returns the optimal state $\sigma$.
% \end{theorem}
% \begin{theorem}
% The semidefinite program as defined in Definition \ref{def:SDP} has as its primal optimum the solution to \eqref{eq:maximprob} and thereby the optimal average fidelity. Moreover, the semidefinite program also returns the optimal state $\sigma_\sharp$. 
% \end{theorem}

We now discuss how the optimal average fidelity $\max_{\sigma \in \operatorname{D}(\mathcal H) } f(\sigma)$ is an upper bound for the primal objective function $\langle A, \Chi \rangle $ of the above SDP. 

\begin{lemma}
The primal objective function of the semidefinite program $(\Phi,A,B)$ from Definition \ref{def:SDP} is bounded above by the optimal average fidelity $\max_{\sigma \in \operatorname{D}(\mathcal H) } f(\sigma)$.
\end{lemma}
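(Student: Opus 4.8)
The plan is to unpack the constraint $\Phi(X) = B$ block-by-block and reduce the objective to a sum of fidelity-type terms. Writing $X$ as an $(n+1)\times(n+1)$ array of $d\times d$ blocks $X_{kl}$, the constraint $\Phi(X) = B$ forces the first $n$ diagonal blocks to satisfy $X_{ii} = \rho_i$ and forces the bottom-right block $\sigma := X_{n+1,n+1}$ to satisfy $\operatorname{Tr}(\sigma) = 1$. Since $X \geq 0$, every principal block is positive semidefinite, so in particular $\sigma \in \operatorname{D}(\mathcal H)$ is a genuine density matrix. This is the object over which the average fidelity will eventually be evaluated.

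Next I would evaluate the objective. Because $A$ has nonzero blocks only in the last block-row and block-column (namely $A_{i,n+1} = A_{n+1,i} = \tfrac12 p_i \mathds{1}_\mathcal H$), expanding $\langle A, X\rangle = \operatorname{Tr}(AX)$ and using the Hermiticity relation $X_{n+1,i} = X_{i,n+1}^*$ should collapse the double block-sum to
\begin{equation}
\langle A, X\rangle = \sum_{i=1}^n p_i\, \Re\!\left(\operatorname{Tr}(X_{i,n+1})\right).
\end{equation}
The off-diagonal blocks $X_{i,n+1}$ are not free: positivity of $X$ implies that each $2\times 2$ principal submatrix $\begin{pmatrix} \rho_i & X_{i,n+1} \\ X_{i,n+1}^* & \sigma\end{pmatrix}$ is positive semidefinite, since any principal submatrix of a positive semidefinite matrix is itself positive semidefinite.

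The key step is then to invoke the characterization \eqref{eq:SDPContraction}, which says this $2\times 2$ block positivity is equivalent to $X_{i,n+1} = \rho_i^{1/2} U_i \sigma^{1/2}$ for some contraction $U_i$ with $\|U_i\|\leq 1$. Substituting and using the cyclicity of the trace to write $\operatorname{Tr}(X_{i,n+1}) = \operatorname{Tr}(\sigma^{1/2}\rho_i^{1/2} U_i)$, the variational characterization of the trace norm \eqref{eq:VarTraceNorm} bounds each term by $\max_{\|U\|\leq 1}\operatorname{Tr}(\sigma^{1/2}\rho_i^{1/2}U) = \operatorname{F}(\rho_i,\sigma)$. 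Summing over $i$ yields $\langle A, X\rangle \leq f(\sigma)$, and since $\sigma$ is a valid density matrix this is in turn at most $\max_{\sigma\in\operatorname{D}(\mathcal H)} f(\sigma)$, which is the claim.

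I expect the only real subtlety to lie in the bookkeeping of the objective computation --- correctly pairing the blocks of $A$ with those of $X$ and reducing the result to a real part --- rather than in anything conceptually deep. The passage from block positivity to a contraction is handed to us directly by \eqref{eq:SDPContraction}, and the per-term trace-norm bound is immediate from \eqref{eq:VarTraceNorm}, so these steps require no new work.
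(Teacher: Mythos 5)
Your proposal is correct and follows the paper's own proof essentially step for step: the same block decomposition of the primal-feasible $\Chi$, the same reduction of the objective to $\sum_i p_i \Re(\operatorname{Tr}(X_i))$, and the same use of the $2\times 2$ principal-block positivity via \eqref{eq:SDPContraction} together with the trace-norm variational bound \eqref{eq:VarTraceNorm} to cap each term by $\operatorname{F}(\rho_i,\sigma)$. The only difference is cosmetic: you spell out the objective bookkeeping slightly more explicitly than the paper does.
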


\begin{proof}
    Under the constraint $\Phi(\Chi) = B$, any primal feasible $\Chi \in \mathsf A$ must have the form
    \begin{equation} \label{eq:primalfeasible}
        \Chi  = \begin{pmatrix}
        \rho_1 & R_{12} &\cdots & R_{1n} & X_1 \\
        R_{12}^* & \rho_2 & \cdots & R_{2n} & X_2  \\
        \vdots & \vdots & \ddots & \vdots & \vdots \\
        R_{1n}^* & R_{2n}^* & \cdots & \rho_n & X_n\\
        X_1^* & X_2^* &\cdots & X_n^* & \sigma
        \end{pmatrix} \geq 0.
    \end{equation}  
    Here, each of the block submatrices $\rho_i, \sigma, X_i, \text{and } R_{ij} \in \operatorname{L}(\mathcal H)$ are square matrices, with $\rho_i$ and $\sigma$ being positive semidefinite as well, for all $i,j \in \{1,\ldots,n\}$. Let us briefly note what these square matrices are. The $\rho_i$s are the fixed quantum states in the ensemble $\mathsf R$ and $\sigma \in \operatorname{D}(\mathcal H)$ is the actual objective matrix which is varied over the set of all $d$-dimensional quantum states $\operatorname{D}(\mathcal H)$. $R_{ij}$s are matrices whose specific form we shall discuss later and the form of the matrices $X_i$s are discussed next. 
    
    The positivity of $\Chi$ necessarily implies positivity of each principle sub-block. In particular,
    \begin{equation}
        M_i = \begin{pmatrix}
        \rho_i & X_i \\
        X_i^* & \sigma
        \end{pmatrix} 
        \geq 0 \quad \text{for all $i=1,\ldots,n$}.
    \end{equation} 
    By \eqref{eq:SDPContraction}, we have $M_i \geq 0$ if and only if $X_i = \rho_i^{1/2} U_i \sigma^{1/2}$ for some $U_i$ with $\|U_i\| \leq 1$. Hence, by the variational property of the trace norm~(see \eqref{eq:VarTraceNorm}), we have that $M_i \geq 0$ necessarily implies $\Re( \operatorname{Tr}(X_i)) \leq \operatorname{F}(\rho_i,\sigma)$. Therefore for any primal feasible point $\Chi$, we have the following inequality involving the objective function $\langle A,\Chi \rangle$:
    \begin{equation} \label{eq:FeasibleInequality}
        \langle A,\Chi \rangle =\sum_{i=1}^{n} p_i \Re(\operatorname{Tr}(X_i)) \leq \sum_{i=1}^{n} p_i \operatorname{F}(\rho_i,\sigma) \leq \max_{\sigma \in \operatorname{D}(\mathcal H) } f(\sigma) = f(\sigma_\sharp),
    \end{equation}
    where $\sigma_{\sharp}$ is the optimal state. That is, the value of the objective function $\langle A, \Chi \rangle$ is bounded above by the optimal average fidelity.
    \end{proof}
    Essentially, the maximization happens at two levels: the first level being that each $X_i$ is varied to maximize the real part of its trace, which is constrained by the value of the variable $\sigma$ (along with the value of each fixed $\rho_i$) and the degree of freedom we allow to $\sigma$ defines the second level of maximization.

    We now prove, in two different ways, that the inequality \eqref{eq:FeasibleInequality} is saturated. The first proof, which makes use of the form of $\Chi$ and the fact that we are optimizing over the closed and bounded set of density matrices, culminates in the following theorem.  
    \begin{theorem}
    For any ensemble $(\mathsf R, p)$, all the inequalities in \eqref{eq:FeasibleInequality} is saturated by the semidefinite program $(\Phi, A, B)$.
    \end{theorem}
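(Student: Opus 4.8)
The plan is to exhibit an explicit primal-feasible $\Chi \in \mathsf A$ that attains $\langle A,\Chi\rangle = f(\sigma_\sharp)$, which forces every inequality in \eqref{eq:FeasibleInequality} to collapse to an equality. Two distinct inequalities must be saturated: the per-block fidelity bound $\Re(\operatorname{Tr}(X_i)) \le \operatorname{F}(\rho_i,\sigma)$ arising from each principal block $M_i$, and the outer bound $\sum_i p_i\operatorname{F}(\rho_i,\sigma) \le \max_\sigma f(\sigma)$. The latter is immediate once the maximizer $\sigma_\sharp$ is known to exist: $f$ is continuous and $\operatorname{D}(\mathcal H)$ is compact, so the supremum is attained, and I simply take the objective block $\sigma = \sigma_\sharp$. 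For the former, for each $i$ I choose $X_i = \rho_i^{1/2} U_i \sigma_\sharp^{1/2}$ with $U_i = V_i^*$ the optimal contraction from \eqref{eq:UnitaryFactor}, where $\rho_i^{1/2}\sigma_\sharp^{1/2} = V_i\,\lvert\rho_i^{1/2}\sigma_\sharp^{1/2}\rvert$, so that $\Re(\operatorname{Tr}(X_i)) = \operatorname{F}(\rho_i,\sigma_\sharp)$ by the variational property of the trace norm.

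The real content is to show that $\Chi$ can be completed to a positive semidefinite operator carrying these optimal off-diagonal blocks $X_i$; positivity of the $2\times 2$ principal blocks $M_i$ alone does not suffice, since $\Chi$ also contains the cross terms $R_{ij}$. I would establish global positivity by a Gram (factorization) argument: any operator of the form $\Chi = \widetilde B^{\,*}\widetilde B$, with $\widetilde B = (B_1\;\cdots\;B_{n+1})$ a block row of operators sharing a common target space, is automatically positive semidefinite and has $(i,j)$ block $B_i^* B_j$. Choosing
\begin{equation}
  B_i = V_i\,\rho_i^{1/2}\ (i \le n), \qquad B_{n+1} = \sigma_\sharp^{1/2},
\end{equation}
the diagonal blocks reproduce $B_i^* B_i = \rho_i^{1/2} V_i^* V_i \rho_i^{1/2} = \rho_i$ and $B_{n+1}^* B_{n+1} = \sigma_\sharp$, while the last column gives $B_i^* B_{n+1} = \rho_i^{1/2} V_i^* \sigma_\sharp^{1/2} = X_i$, exactly as required. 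This simultaneously fixes the deferred blocks to $R_{ij} = B_i^* B_j = \rho_i^{1/2} V_i^* V_j\, \rho_j^{1/2}$, answering the earlier open question about their form.

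I expect the main obstacle to be precisely this global positivity step, and within it the identity $B_i^* B_i = \rho_i$, which needs $V_i$ to be a genuine unitary rather than a mere partial isometry; this is exactly where full-rankness of $\rho_i$ and $\sigma_\sharp$ enters. For rank-deficient ensembles I would either invoke the depolarization remark (perturb to full rank, build $\Chi$, then pass to the limit using continuity of fidelity) or replace $V_i$ by an appropriately supported partial isometry, leaving the attained value unchanged. Finally I check primal feasibility: $\Chi \ge 0$ holds by construction, its diagonal $d\times d$ blocks are the $\rho_i$, and $\operatorname{Tr}(\sigma_\sharp)=1$, so $\Phi(\Chi)=B$. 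Hence $\langle A,\Chi\rangle = \sum_i p_i\operatorname{F}(\rho_i,\sigma_\sharp) = f(\sigma_\sharp)$, and the whole chain \eqref{eq:FeasibleInequality} becomes a string of equalities, establishing saturation.
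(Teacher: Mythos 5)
Your proposal is correct and takes essentially the same approach as the paper: the paper's proof constructs the same Gram matrix, $\Chi(\sigma_\sharp) = ZZ^*$ with $Z$ the block column of $\rho_i^{1/2}U_i$ and $\sigma_\sharp^{1/2}$ (your $\widetilde B^{\,*}\widetilde B$ is the identical construction in transposed notation), and likewise invokes continuity of $f$ and compactness of $\operatorname{D}(\mathcal H)$ for the existence of $\sigma_\sharp$. Your final worry about full-rankness is actually unnecessary --- for a square matrix a genuine unitary polar factor always exists (it is merely non-unique in the rank-deficient case), so the construction covers arbitrary ensembles directly --- but your proposed remedies are valid and do not affect correctness.
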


    \begin{proof}
        We first establish that for any state $\sigma \in \operatorname{D}(\mathcal H)$, there exists a primal feasible point $\Chi(\sigma) \geq 0$ such that $\langle A, \Chi (\sigma) \rangle = f(\sigma)$. To see this, fix $\sigma \in \operatorname{D}(\mathcal H)$ arbitrary and consider the $(n+1)d \times d$ matrix $Z = Z(\sigma)$ of the form
    \begin{equation}
    Z = \begin{pmatrix}
    \rho_1^{1/2} U_1 \\ \rho_2^{1/2} U_2 \\ \vdots  \\ \rho_n^{1/2} U_n \\ \sigma^{1/2}.
    \end{pmatrix}
    \end{equation}
    Here, each $U_i$ is the \textit{optimal} unitary that maximizes $\Re\left(\operatorname{Tr}\left(\rho_i^{1/2}U_i \sigma^{1/2}\right) \right)$. Consider
    \begin{equation} \label{eq:ChiSigma}
    \begin{aligned}
    \Chi(\sigma) = 
        ZZ^* &= \begin{pmatrix}
            \rho_1 & \rho_1^{1/2}U_1 U_2^*\rho_2^{1/2} &\cdots & \rho_1^{1/2}U_1 U_n^*\rho_n^{1/2} & \rho_1^{1/2}U_1 \sigma^{1/2} \\
            \rho_2^{1/2}U_2 U_1^*\rho_1^{1/2} & \rho_2 & \cdots & \rho_2^{1/2}U_2 U_n^*\rho_n^{1/2} & \rho_2^{1/2}U_2 \sigma^{1/2}  \\
            \vdots & \vdots & \ddots & \vdots & \vdots \\
            \rho_n^{1/2}U_n U_1^*\rho_1^{1/2} & \rho_n^{1/2}U_n U_2^*\rho_2^{1/2} & \cdots & \rho_n & \rho_n^{1/2}U_n \sigma^{1/2}\\
            \sigma^{1/2}U_1^* \rho_1^{1/2} & \sigma^{1/2}U_2^* \rho_2^{1/2} &\cdots & \sigma^{1/2}U_n^* \rho_n^{1/2} & \sigma
            \end{pmatrix}. 
    \end{aligned}
    \end{equation}
    We see that $\Chi(\sigma) \geq 0$ for any state $\sigma$ and it follows that for such a choice of $\Chi(\sigma)$, we have $\langle A,\Chi (\sigma)\rangle = f(\sigma)$. 
    
    Noting that the set $\operatorname{D}(\mathcal H) $ is closed and bounded, it follows that the (continuous) function $f(\sigma)$ achieves its maximum on $\operatorname{D}(\mathcal H)$. Therefore there exists an optimal $\sigma_\sharp \in \operatorname{D}(\mathcal H)$ achieving the optimal average fidelity. The optimal primal feasible point $\Chi(\sigma_\sharp) $ such that $ \langle A, \Chi(\sigma_\sharp) \rangle  = f(\sigma_\sharp)$ can be then constructed by \eqref{eq:ChiSigma}.
    \end{proof}
    
The second proof makes use of Slater's condition. As we show in Theorem \ref{Th:Slater}, SDP $(\Phi, A, B)$ always satisfies the first of Slater's conditions, thereby ensuring strong duality and the existence of a primal feasible point which achieves optimal value. Moreover, when both of Slater's conditions are satisfied, which happens when all the states in the ensemble are full rank, complementary slackness relations hold.

We formulate the satisfaction of Slater's conditions by the SDP $(\Phi, A, B)$ as a separate theorem which also accounts for complementary slackness. The following theorem shows that the above SDP exhibits strong duality and, when all the states in $\mathsf R$ are full rank, complementary slackness. The fixed point equation, and thereby the fixed-point algorithm for the optimal state, will then arise from complementary slackness.

\begin{theorem} \label{Th:Slater}
The semidefinite program $(\Phi, A, B)$ exhibits strong duality. Moreover,
complementary slackness holds if and only if all the states in $\mathsf{R}$ are full rank.
\end{theorem}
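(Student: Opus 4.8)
The plan is to prove both assertions through Slater's theorem for semidefinite programs, producing explicit strictly feasible certificates, and then invoking the complementary slackness theorem. Throughout I use that $\Phi^*(Y)=\operatorname{diag}(Y_{11},\dots,Y_{nn},y\mathds{1}_{\mathcal H})$ depends only on the diagonal blocks $Y_{ii}$ of $Y$ and the scalar $y$ in its bottom-right corner, together with the fact that $A$ couples each block $i\le n$ to the last block through $\tfrac12 p_i\mathds{1}$.

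First I would establish strong duality via the first of Slater's conditions. The primal value is finite since $\alpha\le f(\sigma_\sharp)\le 1$, so it suffices to exhibit a Hermitian $Y$ with $\Phi^*(Y)>A$. Take $Y$ block diagonal with $Y_{ii}=\mathds{1}_{\mathcal H}$ for $i\le n$ and bottom-right entry $y$. Then $\Phi^*(Y)-A$ is the block matrix carrying identity blocks on the first $n$ diagonal slots, $y\mathds{1}$ in the last, and $-\tfrac12 p_i\mathds{1}$ coupling slot $i$ to the last; taking the Schur complement with respect to the (positive definite) leading $nd\times nd$ identity block reduces strict positivity to the scalar condition $y>\tfrac14\sum_i p_i^2$, which holds e.g.\ for $y=1$. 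Slater's first condition then gives $\alpha=\beta$ together with attainment of the primal optimum.

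For the forward direction of the equivalence, suppose every $\rho_i$ is full rank. Then Slater's second condition holds: the block-diagonal operator $X=\operatorname{diag}(\rho_1,\dots,\rho_n,\tfrac1d\mathds{1}_{\mathcal H})$ is positive definite precisely because each $\rho_i>0$, and $\Phi(X)=\operatorname{diag}(\rho_1,\dots,\rho_n,1)=B$. Since $\beta=\alpha$ is finite, this condition applies and yields attainment of the dual optimum. With both Slater conditions in force the primal and dual optima are simultaneously attained, and the complementary slackness theorem applied to any optimal pair $(X,Y)$ yields $\Phi^*(Y)X=AX$ and $\Phi(X)Y=BY$.

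It remains to prove the converse, which I expect to be the main obstacle. The easy half is that rank-deficiency of some $\rho_i$ makes Slater's second condition fail: any $X$ with $\Phi(X)=B$ has $\rho_i$ as its $i$-th diagonal block, a principal submatrix that is not invertible, so no positive definite feasible $X$ exists. The substantive point is that this failure genuinely destroys complementary slackness rather than merely removing a sufficient condition for it. To see this I would suppose an optimal dual $Y$ exists and read the relation $\Phi^*(Y)X=AX$ on the optimal primal point $X=\Chi(\sigma_\sharp)$ from \eqref{eq:ChiSigma}: its $(i,n+1)$ block gives $Y_{ii}\,\rho_i^{1/2}U_i\sigma_\sharp^{1/2}=\tfrac12 p_i\sigma_\sharp$, whose left-hand side has rank at most $\operatorname{rank}(\rho_i)<d$ while the right-hand side has rank $\operatorname{rank}(\sigma_\sharp)$. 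When $\sigma_\sharp$ is full rank this is an immediate contradiction, so no optimal dual can exist and complementary slackness fails. The delicate step, and the crux of the argument, is to control $\operatorname{rank}(\sigma_\sharp)$: one must rule out the degenerate possibility that the optimizer inherits enough rank-deficiency to match the left-hand side (the behaviour that occurs for a trivial single-state ensemble). I would close this by arguing that for a genuine ensemble the optimal state remains non-degenerate on the relevant support, forcing the rank mismatch and completing the equivalence.
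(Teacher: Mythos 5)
Your strong-duality argument and your forward implication are correct and essentially the paper's own proof: the paper uses the same dual certificate $Y=\operatorname{diag}(\mathds{1},\dots,\mathds{1},1)$, verifying $\Phi^*(Y)-A>0$ by strict diagonal dominance where you use a Schur complement (both work; yours is arguably cleaner), and the same block-diagonal positive definite point to satisfy Slater's second condition in the full-rank case. One minor omission: finiteness of $\alpha$ also requires $\mathsf{A}\neq\varnothing$, so the feasible point $\operatorname{diag}(\rho_1,\dots,\rho_n,\sigma)$ should already be exhibited in the strong-duality step, not only later in the full-rank case.

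The genuine gap is the converse, exactly where you flagged it, and the step you promise cannot be supplied. Your rank-mismatch argument needs $\operatorname{rank}(\sigma_\sharp)>\operatorname{rank}(\rho_i)$, but the optimizer over an ensemble of rank-deficient states is itself often rank-deficient: for $\rho_1=|0\rangle\langle 0|$, $\rho_2=|{+}\rangle\langle{+}|$, $p=(\tfrac12,\tfrac12)$ in $d=2$, the maximizer of $\tfrac12\bigl(\sqrt{\langle 0|\sigma|0\rangle}+\sqrt{\langle{+}|\sigma|{+}\rangle}\bigr)$ is a pure state (the objective is strictly increasing in the Bloch coordinates, so the maximum lies on the sphere), and the ranks on both sides of your block identity match at $1$. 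Worse, the implication itself fails without further hypotheses: for $\rho_1=\rho_2=|0\rangle\langle 0|$, $p=(\tfrac12,\tfrac12)$, the dual point $Y_1=Y_2=\tfrac14\mathds{1}$, $y=\tfrac12$ is feasible (the Schur complement condition is $y\geq\tfrac12$) with objective $\tfrac14+\tfrac14+\tfrac12=1=\alpha$, so the dual optimum is attained and the complementary slackness relations do hold even though both states are rank one. So no amount of rank counting closes this direction, and extra assumptions on the ensemble would be needed. For what it is worth, the paper does not close it either: its proof notes that rank-deficiency of some $\rho_i$ destroys Slater's second condition (your "easy half") and then asserts that complementary slackness "will not hold", which treats a sufficient condition for dual attainment as if it were necessary. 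You correctly identified this non-sequitur; neither your proposal nor the paper repairs it.
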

\begin{proof}    
    Recall that the first Slater's condition requires the primal constraint to be satisfied $(\mathsf A \neq \varnothing)$ and the dual constraint to be strictly satisfied $(\Phi^*(\operatorname{Y}) > A)$. Note that for any quantum state $\rho \in \operatorname{D}(\mathcal H)$, the matrix
\begin{equation}
    \Chi = \begin{pmatrix}
    \rho_1 & 0 &\cdots & 0 \\
    0 & \ddots & &\vdots  \\
    \vdots && \rho_n & 0\\
    0 &\cdots & 0 & \rho
    \end{pmatrix},
\end{equation}
is a primal feasible point as it satisfies $\Phi(\operatorname{X}) = B$, thereby ensuring $\mathsf A \neq \varnothing$. For the second part, consider the identity matrix 
    \begin{equation}
        \operatorname{Y} = \begin{pmatrix}
    \mathds{1}_\mathcal H  & 0 &\cdots & 0 \\
    0 & \ddots & &\vdots  \\
    \vdots && \mathds{1}_\mathcal H & 0\\
    0 &\cdots & 0 & 1
    \end{pmatrix} \in \operatorname{Herm}(\mathcal Y),
    \end{equation}
    and note that
    \begin{equation}
        \Phi^*(\operatorname{Y}) - A = \begin{pmatrix}
    \mathds{1}_\mathcal H  & 0 &\cdots & -\frac12p_1 \mathds{1}_\mathcal H \\
    0 & \ddots & &\vdots  \\
    \vdots && \mathds{1}_\mathcal H & -\frac12p_n \mathds{1}_\mathcal H\\
    -\frac12p_1 \mathds{1}_\mathcal H &\cdots & -\frac12p_n \mathds{1}_\mathcal H & \mathds{1}_\mathcal H
    \end{pmatrix}  > 0.
    \end{equation}
     This is by  \cite[Theorem 6.1.10]{Hornmatrix}, which states that if a Hermitian matrix $M$ with strictly positive diagonal entries is strictly diagonally dominant i.e., $|M(i,i)| > \sum_j|M(i,j)|$ for all rows $i$, then $M$ is positive definite. Since $p_i < 1$ for any $i \in \{1,\ldots,n\},$  we see that $\Phi^*(\operatorname{Y})-A$ satisfies this criterion and therefore is positive definite. Of course, there could be the case where $p_i = 1$ for some $i$ and $0$ for all $j\neq i$. However in such a case the optimization problem is trivially solved by choosing the optimal state to be $\sigma_\sharp = \rho_i$ and we get $f(\sigma_\sharp) = 1$. We do not consider such a deterministic scenario.
    
    Hence, the semidefinite program $(\Phi, A, B)$ satisfies the first Slater condition which in turn ensures strong duality. Moreover, this optimum $\alpha$ is always achieved for some $\Chi_\sharp \in \mathsf A$. The second condition, in contrast, need not always hold. One sees that if any state $\rho_i$ is \emph{not} full rank, then there exists no $\Chi \in \operatorname{Pd}(\mathcal X )$ satisfying $\Phi(\operatorname{X}) = B$, as $B$ itself is rank deficient and therefore complementary slackness will not hold. However, for the case where all states $\{\rho_i\}$ are full rank, then the second condition is also satisfied which in turn leads to complementary slackness. To show that $\mathsf B$ is non-empty, we may use the same $\operatorname{Y}$ as above and $\Chi$ as defined above satisfies $\Phi(\Chi) = B$ while being positive definite for any full rank quantum state $\rho$. 
    
    Therefore, if, and only if, all the states $\{\rho_i\}$ are full rank, there exists primal and dual optimal operators $\Chi_\sharp \in \mathsf A$  and $\operatorname{Y}_\sharp \in \mathsf B$ such that
    \begin{equation}
    \langle A,\Chi_\sharp\rangle = \alpha = \beta =\langle B,\operatorname{Y}_\sharp\rangle.
    \end{equation}
    Such a condition implies complementary slackness between primal and dual optimal points which manifests as  
    \begin{equation} \label{eq:compslack}
    \Phi^*(\operatorname{Y}_\sharp) \Chi _\sharp = A\Chi_\sharp. 
    \end{equation}
\end{proof}    

The second proof of the saturation of inequality \eqref{eq:FeasibleInequality} follows from the fact that the semidefinite program $(\Phi,A,B)$ exhibits strong duality. We now focus on \eqref{eq:compslack}, which allows us to infer insights about the properties of the optimal state $\sigma_\sharp$, including a fixed point equation. This is formally discussed in the next theorem.

\begin{theorem} [Fixed point equation for optimal state]
If all the states $\{\rho_1,\ldots, \rho_n\}$ are full rank, the following equation holds:
\begin{equation} \label{eq:FPOri}
\sigma_\sharp = \frac1{f(\sigma_\sharp)} \sum_{i=1}^{n} p_i \rho_i^{1/2} U_i \sigma_\sharp^{1/2} = 
\frac1{f(\sigma_\sharp)} \sum_{i=1}^{n} p_i  \sigma_\sharp^{1/2} \rho_i^{1/2} U_i
, 
\end{equation}
for certain unitaries $U_i \in \emph{U}(\mathcal H)$  and optimal state $\sigma_\sharp$. This, in turn, implies
\begin{equation} \label{eq:FPeqn} 
    \sigma_\sharp = \Gamma \left( \sum_{i=1}^{n} p_i \sqrt{\sigma_\sharp^{1/2} \rho_i \sigma_\sharp^{1/2}}
\right),
\end{equation}
where the map $\Gamma$ is defined as $\Gamma(A) = A / \emph{Tr}(A)$. 
\end{theorem}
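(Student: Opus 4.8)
The plan is to read the fixed-point equation directly off the complementary-slackness relation \eqref{eq:compslack}, which is available precisely because all $\rho_i$ are full rank (Theorem~\ref{Th:Slater}). I would take the primal optimal point to be $\Chi_\sharp = \Chi(\sigma_\sharp) = ZZ^*$ in the explicit block form \eqref{eq:ChiSigma}, whose $(i,n{+}1)$ blocks are $\rho_i^{1/2}U_i\sigma_\sharp^{1/2}$ with $U_i$ the fidelity-achieving unitaries from the construction, paired with a dual optimal $Y_\sharp$. Since $\langle A,\Chi_\sharp\rangle = \alpha = \beta = \langle B,Y_\sharp\rangle$, complementary slackness $\Phi^*(Y_\sharp)\Chi_\sharp = A\Chi_\sharp$ holds, and this single matrix identity is the engine of the whole proof.

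The first step is pure block bookkeeping. The map $\Phi^*$ produces a block-diagonal operator whose last diagonal block is $y_\sharp\mathds{1}_{\mathcal H}$, where $y_\sharp$ is the scalar entry of $Y_\sharp$, while $A$ has nonzero blocks only in its last row and column. Equating the $(n{+}1,n{+}1)$ blocks of $\Phi^*(Y_\sharp)\Chi_\sharp$ and $A\Chi_\sharp$ then yields
\begin{equation}
y_\sharp\,\sigma_\sharp = \tfrac12\sum_{i=1}^n p_i\,\rho_i^{1/2}U_i\sigma_\sharp^{1/2}.
\end{equation}
To pin down the constant I would take the trace of both sides: since $\operatorname{Tr}(\sigma_\sharp)=1$ and each $U_i$ is chosen so that $\operatorname{Tr}(\rho_i^{1/2}U_i\sigma_\sharp^{1/2}) = \operatorname{F}(\rho_i,\sigma_\sharp)$, this forces $y_\sharp = \tfrac12 f(\sigma_\sharp)$. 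Clearing the constant gives the first equality in \eqref{eq:FPOri}, namely $f(\sigma_\sharp)\,\sigma_\sharp = \sum_{i} p_i\,\rho_i^{1/2}U_i\sigma_\sharp^{1/2}$.

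The remaining step is to convert this awkwardly sandwiched sum into the manifestly positive form \eqref{eq:FPeqn}. Here I would invoke the polar-decomposition identity \eqref{eq:PolDecomp}, which for the optimal unitary reads $\sigma_\sharp^{1/2}\rho_i^{1/2}U_i = \sqrt{\sigma_\sharp^{1/2}\rho_i\sigma_\sharp^{1/2}}$. Left-multiplying the identity above by $\sigma_\sharp^{1/2}$ and applying this relation termwise rewrites the right-hand side as $\big(\sum_{i} p_i\sqrt{\sigma_\sharp^{1/2}\rho_i\sigma_\sharp^{1/2}}\big)\sigma_\sharp^{1/2}$; right-multiplying by $\sigma_\sharp^{-1/2}$ (legitimate because $\sigma_\sharp$ is full rank when the $\rho_i$ are) then leaves $f(\sigma_\sharp)\,\sigma_\sharp = \sum_{i} p_i\sqrt{\sigma_\sharp^{1/2}\rho_i\sigma_\sharp^{1/2}}$. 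Since the trace of the right-hand side is exactly $\sum_i p_i\operatorname{F}(\rho_i,\sigma_\sharp)=f(\sigma_\sharp)$, this is precisely \eqref{eq:FPeqn} with the normalization map $\Gamma$. The second equality in \eqref{eq:FPOri} is then immediate, since both $\sum_{i} p_i\rho_i^{1/2}U_i\sigma_\sharp^{1/2}$ and $\sum_{i} p_i\sigma_\sharp^{1/2}\rho_i^{1/2}U_i$ now equal $f(\sigma_\sharp)\,\sigma_\sharp$.

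I expect the main obstacle to be conceptual rather than computational: recognizing that the product $\rho_i^{1/2}U_i\sigma_\sharp^{1/2}$ delivered by complementary slackness is a similarity transform (by $\sigma_\sharp^{\pm1/2}$) of the clean positive operator $\sqrt{\sigma_\sharp^{1/2}\rho_i\sigma_\sharp^{1/2}}$, so that the sum collapses to \eqref{eq:FPeqn} only after conjugating by $\sigma_\sharp^{1/2}$ and appealing to \eqref{eq:PolDecomp}. The two supporting facts that must be handled with care are the exact identification of the optimal unitary $U_i$ inside \eqref{eq:PolDecomp} (so that the correct adjoint of $U_i$ appears and the middle factor is genuinely $\sqrt{\sigma_\sharp^{1/2}\rho_i\sigma_\sharp^{1/2}}$) and the full-rankness of $\sigma_\sharp$, which is exactly what makes the final cancellation of $\sigma_\sharp^{1/2}$ valid.
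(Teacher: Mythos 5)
Your proposal takes the same route as the paper's own proof: extract the fixed-point equation from the complementary-slackness identity \eqref{eq:compslack}, isolate the bottom-right block to obtain $f(\sigma_\sharp)\,\sigma_\sharp = \sum_i p_i \rho_i^{1/2}U_i\sigma_\sharp^{1/2}$ (fixing the dual scalar by taking traces), and then pass to the positive form via the polar identity \eqref{eq:PolDecomp} and cancellation of $\sigma_\sharp^{1/2}$. Your one real deviation --- taking the primal optimal point to be the explicit $\Chi(\sigma_\sharp)=ZZ^*$ of \eqref{eq:ChiSigma} rather than an abstract optimum $\Chi_\sharp$ --- is legitimate and arguably cleaner: complementary slackness applies to any primal--dual pair attaining equal objective values, $\Chi(\sigma_\sharp)$ attains $\alpha=f(\sigma_\sharp)$, and a dual optimum exists because all $\rho_i$ are full rank. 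This spares you the paper's argument that the blocks $X_i$ of an abstract optimum must have the form $\rho_i^{1/2}U_i\sigma_\sharp^{1/2}$; your block bookkeeping, the trace normalization $y_\sharp = \tfrac12 f(\sigma_\sharp)$, and the identification $\sigma_\sharp^{1/2}\rho_i^{1/2}U_i = \sqrt{\sigma_\sharp^{1/2}\rho_i\sigma_\sharp^{1/2}}$ for the fidelity-achieving $U_i$ are all correct.

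The gap is the full-rankness of $\sigma_\sharp$, which you need for the right-multiplication by $\sigma_\sharp^{-1/2}$ but only assert. It does not follow from anything you establish: the maximizer of a concave function over $\operatorname{D}(\mathcal H)$ can a priori lie on the boundary, and your corner-block equation $y_\sharp\,\sigma_\sharp = \tfrac12\sum_i p_i\rho_i^{1/2}U_i\sigma_\sharp^{1/2}$ is perfectly consistent with a rank-deficient $\sigma_\sharp$, since any kernel vector of $\sigma_\sharp$ annihilates both sides. The paper closes exactly this hole using the slackness blocks you discard when you restrict attention to the $(n{+}1,n{+}1)$ entry of \eqref{eq:CompSlack}: from the $(i,i)$ blocks one gets $\tfrac12 p_i X_i = \rho_i Y_i$, where $\rho_i$ is full rank by hypothesis and $Y_i$ is full rank (this is forced by dual feasibility $\Phi^*(\operatorname{Y}_\sharp)\geq A$, since a kernel vector of $Y_i$ would violate positivity of the $2\times 2$ principal block pairing $Y_i$ with the last diagonal block), hence each $X_i$ is full rank; then the $(i,n{+}1)$ blocks give $Y_iX_i = \tfrac12 p_i\sigma_\sharp$, hence $\sigma_\sharp$ is full rank. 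To complete your proof you must either reinstate this part of the block analysis or supply an independent argument that the maximizer is full rank whenever the ensemble is; without it, the final cancellation --- and with it both the second equality of \eqref{eq:FPOri} and \eqref{eq:FPeqn} --- is unjustified.
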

\begin{proof}
    Let $\Chi_\sharp \in \operatorname{Pos}(\mathcal X)$   and $\operatorname{Y}_\sharp \in \operatorname{Herm} (\mathcal Y )$ be the optimal primal and dual points satisfying
    \begin{equation}
        \langle A,\Chi_\sharp \rangle = \alpha = \beta = \langle B, \operatorname{Y}_\sharp \rangle  = f(\sigma_\sharp).
    \end{equation}
    Decompose $\Chi_\sharp$ and $\Phi^*(\operatorname{Y}_\sharp)$ as
    \begin{equation}
         \Chi_\sharp  = \begin{pmatrix}
         \rho_1 & R_{12} &\cdots & R_{1n} & X_1 \\
         R_{21} & \rho_2 & \cdots & R_{2n} & X_2  \\
         \vdots & \vdots & \ddots & \vdots & \vdots \\
         R_{n1} & R_{n2} & \cdots & \rho_n & X_n\\
         X_1^* & X_2^* &\cdots & X_n^* & \sigma_\sharp
         \end{pmatrix}, 
    \Phi^*(\operatorname{Y}_\sharp) = \begin{pmatrix}
    Y_1 & 0 &\cdots &0 & 0 \\
    0 & Y_2 &\cdots &0 & 0 \\
    \vdots & \vdots &\ddots &\vdots & \vdots \\
    0 & 0 &\cdots &Y_n & 0 \\ 
    0 & 0 &\cdots &0 & 
    z \mathds1_\mathcal H 
    \end{pmatrix},
    \end{equation}
    where $R_{ij} = R_{ji}^*$. By complementary slackness, we have $\Phi^*(\operatorname{Y}_\sharp) \Chi_\sharp = A \Chi_\sharp$, which is equivalent to demanding
    \begin{equation} \label{eq:CompSlack}
    \begin{aligned}
        &\begin{pmatrix} 
        Y_1\rho_1 & Y_1R_{12} &\cdots & Y_1R_{1n} & Y_1X_1 \\
        Y_2R_{21} & Y_2\rho_2 & \cdots & Y_2R_{2n} & Y_2X_2  \\
        \vdots & \vdots & \ddots & \vdots & \vdots \\
        Y_nR_{n1} & Y_nR_{n2} & \cdots & Y_n\rho_n & Y_n X_n \\
        zX_1^* & zX_2^* &\cdots & zX_n^* & z\sigma_\sharp
        \end{pmatrix} = 
        \frac12
        &\begin{pmatrix}
        p_1X_1^* & p_1X_2^* &\cdots & p_1X_n^* & p_1 \sigma_\sharp \\
        p_2X_1^* & p_2X_2^* &\cdots & p_2X_n^* & p_2 \sigma_\sharp \\
        \vdots & \vdots &\ddots &\vdots & \vdots \\
        p_nX_1^* & p_nX_2^* &\cdots & p_nX_n^* & p_n \sigma_\sharp \\ 
        S_1 & S_2  & \cdots & S_n & \sum_{i=1}^{n} p_i X_i
        \end{pmatrix},
    \end{aligned}
    \end{equation}
    where $S_i = p_i \rho_i + \sum_{j=1, j \neq i}^{n} p_jR_{ji}$. 
    
    Note that $Y_i$ and $\rho_i$ must be full rank for all $i \in \{1,\ldots,n\}$  for complementary slackness to hold. Together with the fact that $\frac12p_i X_i =(\frac12 p_iX_i^*)^*=(Y_i \rho_i)^*= \rho_i Y_i$, this implies that the $X_i$s are also full rank for all $i \in \{1,\ldots,n\}$. This is a consequence of the fact that $\frac12p_iX_i =  \rho_i Y_i$ and that the product of two full rank matrices is also full rank (see \cite[Sec. 0.4.6]{Hornmatrix}). By the additional equality $Y_iX_i =\frac12 p_i \sigma_\sharp$, which comes from the last column of \eqref{eq:CompSlack}, we have $\sigma_\sharp$ to be full rank. It follows from these relations that $R_{ij} = \rho_i^{1/2} U_i U_j^* \rho_j^{1/2}$, which is in agreement with the forms for $R_{ij}$ (as defined in \eqref{eq:primalfeasible}) we got in \eqref{eq:ChiSigma}. The last block matrix equality from \eqref{eq:CompSlack} is of particular importance:
    \begin{equation}
        \sigma_\sharp = \frac{1}{2z} \sum_{i=1}^{n} p_i X_i.
    \end{equation}
    We will later on derive the fixed point algorithm from this relation. Note that for $\langle A,\Chi \rangle $ to achieve the optimum value $f(\sigma_\sharp)$, for every index $i$ we must have $X_i =  \rho_i ^{1/2} U_i \sigma^{1/2}_\sharp$,  where $ U_i = \left|\rho_i ^{1/2} \sigma^{1/2}_\sharp \right| \sigma^{-1/2}_\sharp \rho_i ^{-1/2}$ is the inverse of the unitary factor of the polar decomposition of $\rho_i ^{1/2} \sigma_\sharp^{1/2}$. It follows that $\operatorname{Tr}(X_i) = \operatorname{F}(\rho_i,\sigma_\sharp)$, and tracing both sides, we obtain $2z = f(\sigma_\sharp)$, Hence we have
      \begin{equation} \label{eq:SigmaCS}
        \sigma_\sharp = \frac1{f(\sigma_\sharp)} \sum_{i=1}^{n} p_i \rho_i^{1/2}U_i \sigma^{1/2}_\sharp.
    \end{equation}
    Left and right multiplying by $\sigma_\sharp^{1/2}$ and $\sigma_\sharp^{-1/2}$ respectively on both sides, we get
    \begin{equation} \label{eq:FP1unnorm}
        \sigma_\sharp = \frac1{f(\sigma_\sharp)} \sum_{i=1}^{n} p_i \sigma_\sharp^{1/2} \rho_i ^{1/2} U_i = \frac1{f(\sigma_\sharp)} \sum_{i=1}^{n} p_i \sqrt{\sigma_\sharp^{1/2} \rho_i \sigma_\sharp^{1/2}},
    \end{equation}
    where the last equality comes from \eqref{eq:PolDecomp}. Equivalently, one may write 
    \begin{equation} \label{eq:FPAlgo}
        \sigma_\sharp= \Gamma\left( \sum_{i=1}^n p_i  \sqrt{\sigma_\sharp^{1/2} \rho_i \sigma_\sharp^{1/2}}    \right) = 
    \Gamma\left( \sum_{i=1}^n p_i  \left| \rho_i^{1/2} \sigma_\sharp^{1/2}\right|    \right),
    \end{equation}
    where $\Gamma(A) = A/\operatorname{Tr}(A)$.
    \end{proof}
    We can now use expression \eqref{eq:FPAlgo} to construct an iterative fixed-point algorithm to obtain the optimal state. This expression has the property that after each iteration, we have a convex combination of positive definite operators which is then normalized to a density matrix. Therefore, we never leave the set of density matrices during the iteration process.

\subsection{Iterative algorithm for optimal fidelity estimator}

Expression \eqref{eq:FPAlgo} can be used to construct a fixed point iterative algorithm of the form
\begin{equation} \label{eq:Lambda}
    \sigma_{(k)}  = \Lambda\left(\sigma_{(k-1)}\right) :=  \Gamma \left( 
\sum_{i=1}^{n} p_i \left|\rho_i^{1/2} \sigma^{1/2}_{(k-1)}\right|
\right),
\end{equation}
where $\Gamma(A) = A / \operatorname{Tr}(A)$. We define the repeated action of $\Lambda$ recursively as $\Lambda^k(\rho) = \Lambda(\Lambda^{k-1}(\rho))$. This algorithm is numerically seen to converge to the optimal state $\sigma$ for all choices of full-rank initial states $\sigma_{(0)}$ and is much more tractable than solving the semidefinite program $(\Phi,A,B)$ numerically as we avoid optimizing over complex matrices of dimension $(n+1)d$. Though the fixed point algorithm is numerically seen to always converge to the optimal state for random initializations, a well-motivated ansatz is the Commuting estimator $\sigma'$~\eqref{eq:SigmaComm} which is studied in the next subsection.

A second fixed point algorithm can be constructed from results in Ref.~\cite{Bhatia2019}, where Bhatia \emph{et al.} are concerned with solving the related problem of minimizing average squared Bures distance $d^2_{\operatorname{B}} (P,Q) = \operatorname{Tr}(P+Q) - 2 \operatorname{F}(P,Q)$. That is, given a collection of positive definite matrices $\{A_1,\ldots,A_n\} \subset \operatorname{Pd}(\mathcal H)$ and a probability vector $p \in \Delta_n$ over it, they find the solution to the optimization problem

\begin{equation} \label{eq:BhatiaMinProb}
    \argmin_{B \in \operatorname{Pd}(\mathcal H)} \sum_{i=1}^{n} p_i d^2_{\operatorname{B}}(A_i,B) = \sum_{i=1}^{n} p_i \big(\operatorname{Tr}(A_i+B) - \operatorname{F}(A_i,B) \big).
\end{equation}
The solution $B_\sharp$ to this problem satisfies a similar fixed point equation to ours:
\begin{equation}    \label{eq:BhatiaFPeqn}
    B_\sharp = \sum_{i=1}^{n} p_i \sqrt{B_\sharp^{1/2} A_iB_\sharp ^{1/2}}.
\end{equation}
The fixed point algorithm, which provably converges to $B_\sharp$, is of the form
\begin{equation} \label{eq:FPalgoKappa}
    B_{(k+1)} = \operatorname{K} \left(B_{(k)}\right) = B^{-1/2}_{(k)}\left(\sum_{i=1}^{n} p_i \sqrt{B_{(k)}^{1/2} A_iB_{(k)} ^{1/2}} \right)^2 B_{(k)}^{-1/2},
\end{equation}
which reduces to \eqref{eq:BhatiaFPeqn} at the fixed point $B_\sharp$. We can modify this algorithm to obtain a second fixed point algorithm which we describe formally in the following theorem.

\begin{theorem} [Convergence of $\Omega$ fixed point algorithm]
Let $R = \{\rho_1, \ldots, \rho_n\} \in \operatorname{Pd}(\mathcal H)$ be a collection of full-rank states and $p \in \Delta_n$ be a probability vector. Consider the map $\Omega :\operatorname{Pd}(\mathcal H) \to \operatorname{D}(\mathcal H)$ of the form
\begin{equation}
    \label{eq:Omega}
    \Omega(\sigma) = \Gamma(\operatorname{K}(\sigma)) = \Gamma  \left( \sigma^{-1/2}\left(  \sum_{i=1}^{n} p_i\sqrt{\sigma^{1/2} \rho_i \sigma^{1/2}}\right)^2 \sigma^{-1/2} \right),
\end{equation}
where $\Gamma$ is defined as $\Gamma(A) = A/\operatorname{Tr}(A)$. Then the collection of states $\{\Omega^k(\sigma)\}_{k=1}$, where we define $\Omega^{k}(\sigma) = \Omega(\Omega^{k-1}(\sigma))$, converges to the optimal state $\sigma_\sharp$ for any full-rank initial state $\sigma \in \operatorname{D}(\mathcal H)$.
\end{theorem}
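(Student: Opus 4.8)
The plan is to reduce the convergence of $\Omega$ entirely to the already-established convergence of Bhatia et al.'s map $\operatorname{K}$ (Eq.~\eqref{eq:FPalgoKappa}), which applies here because the hypothesis guarantees each $\rho_i$ is full rank. The linchpin is a homogeneity property of $\operatorname{K}$: I would first verify by direct substitution that $\operatorname{K}$ is invariant under positive rescaling of its argument, i.e. $\operatorname{K}(t\sigma) = \operatorname{K}(\sigma)$ for every scalar $t > 0$. Indeed, extracting the factor $\sqrt{t}$ from each $\sqrt{(t\sigma)^{1/2}\rho_i(t\sigma)^{1/2}}$ and the factors $t^{-1/2}$ from each $(t\sigma)^{-1/2}$ shows that the total power of $t$ cancels, so $\operatorname{K}$ is homogeneous of degree zero. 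Since $\operatorname{K}$ also preserves positive definiteness, every iterate stays full rank and the composition $\Omega = \Gamma \circ \operatorname{K}$ is well defined along the whole orbit.

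This scale-invariance lets me replace the per-step normalization by a single normalization at the end. I would prove by induction that $\Omega^k(\sigma) = \Gamma(\operatorname{K}^k(\sigma))$ for all $k \geq 1$. The base case is the definition $\Omega(\sigma) = \Gamma(\operatorname{K}(\sigma))$. For the inductive step, write $\Omega^k(\sigma) = c\,\operatorname{K}^k(\sigma)$ for the positive scalar $c = 1/\operatorname{Tr}(\operatorname{K}^k(\sigma))$; then
\begin{equation*}
\Omega^{k+1}(\sigma) = \Gamma\!\big(\operatorname{K}(c\,\operatorname{K}^k(\sigma))\big) = \Gamma\!\big(\operatorname{K}(\operatorname{K}^k(\sigma))\big) = \Gamma\!\big(\operatorname{K}^{k+1}(\sigma)\big),
\end{equation*}
where the middle equality is exactly scale-invariance. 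Thus the intermediate normalizations are harmless: the $\Omega$-orbit is just the $\operatorname{K}$-orbit viewed projectively.

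With this identity in hand, I would invoke the convergence result imported from Ref.~\cite{Bhatia2019}: for any full-rank start, $\operatorname{K}^k(\sigma) \to B_\sharp$, where $B_\sharp \in \operatorname{Pd}(\mathcal H)$ is the unique solution of the fixed point equation~\eqref{eq:BhatiaFPeqn} with $A_i = \rho_i$. Because $B_\sharp$ is positive definite, $\operatorname{Tr}(B_\sharp) > 0$ and $\Gamma$ is continuous at $B_\sharp$, so $\Omega^k(\sigma) = \Gamma(\operatorname{K}^k(\sigma)) \to \Gamma(B_\sharp)$. It remains to identify this limit as $\sigma_\sharp$. Writing $B_\sharp = t\,\Gamma(B_\sharp)$ with $t = \operatorname{Tr}(B_\sharp)$ and substituting into~\eqref{eq:BhatiaFPeqn}, the degree-one homogeneity of the square-root term gives $\sqrt{t}\,\Gamma(B_\sharp) = \sum_{i=1}^{n} p_i \sqrt{\Gamma(B_\sharp)^{1/2}\rho_i\Gamma(B_\sharp)^{1/2}}$; taking the trace yields $\sqrt{t} = f(\Gamma(B_\sharp))$, so $\Gamma(B_\sharp)$ satisfies precisely the optimal-state fixed point equation~\eqref{eq:FPeqn}. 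By the characterization of $\sigma_\sharp$ obtained from complementary slackness, this forces $\Gamma(B_\sharp) = \sigma_\sharp$.

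The main obstacle I anticipate is not the convergence itself—that is inherited wholesale—but the clean bookkeeping around scaling: confirming that $\operatorname{K}$ is exactly degree-zero homogeneous so that no stray factor of $t$ survives, and rigorously matching the unconstrained Bures barycenter $B_\sharp$ with the trace-constrained optimizer via the rescaling $B_\sharp = f(\sigma_\sharp)^2\,\sigma_\sharp$. The uniqueness needed to conclude $\Gamma(B_\sharp)=\sigma_\sharp$ should be borrowed either from the uniqueness of the Bures barycenter in Ref.~\cite{Bhatia2019} or from the convexity of Problem~\eqref{prob:maxavfid}.
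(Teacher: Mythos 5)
Your proposal is correct and takes essentially the same route as the paper: both reduce the convergence of $\Omega$ to the known convergence of Bhatia et al.'s iteration $\operatorname{K}$ from Ref.~\cite{Bhatia2019} and then compose with the continuous normalization map $\Gamma$. If anything, your write-up is more complete than the paper's two-line proof, since you make explicit the degree-zero homogeneity of $\operatorname{K}$ (needed to justify $\Omega^k = \Gamma\circ\operatorname{K}^k$, i.e.\ that per-step normalization is harmless) and the identification of the limit $\Gamma(B_\sharp) = \sigma_\sharp$ via the rescaling $B_\sharp = f(\sigma_\sharp)^2\,\sigma_\sharp$ together with uniqueness of the Bures barycenter, both of which the paper leaves implicit.
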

\begin{proof}
    The proof is a simple extension of \cite[Theorem 11]{Bhatia2019}, which proves that the fixed point algorithm $\operatorname{K}$~\eqref{eq:FPalgoKappa} converges to \eqref{eq:BhatiaFPeqn}. By noting that $\Gamma$ is a continuous function (as it is division by a non-zero function), it follows that the fixed point algorithm $\Omega$ converges to $\sigma_\sharp$. 
\end{proof}

We numerically compare the performance of the two fixed point algorithms in terms of the total time taken for convergence in Appendix \ref{app:AddNumerics}. The results indicate they perform similarly in terms of runtime performance. Since the convergence of $\Omega$ fixed-point algorithm can be proven theoretically, we prefer it over $\Omega$ fixed-point algorithm. In the following sections, when we refer to simply `fixed-point algorithm', we mean the $\Omega$ fixed-point algorithm.

\subsection{Heuristic approximations of optimal fidelity estimators}~\label{Sec:CommEstimator}

If all the states in $\mathsf R$ commute pairwise i.e., $[\rho_i,\rho_j] = 0$ for all $i,j \in \{1,\ldots,n\}$, then there exists a simple analytic expression for the optimal state $\sigma_\sharp$. We begin with noting that when all the states commute pairwise, the problem reduces to a \textit{classical} problem, as the problem can be considered in the common eigenbasis where we are now dealing with a collection of probability vectors. Hence we may take that the optimal solution is also a probability vector, or that the optimal state commutes with all $\rho_i \in \mathsf R$. We then have
\begin{equation}
f(\sigma_\sharp)\cdot\sigma_\sharp = \sum_{i=1}^{n} p_i\sqrt{\sigma_\sharp^{1/2}\rho_i \sigma_\sharp^{1/2}} = \sum_{i=1}^{n} p_i\rho_i^{1/2} \sigma_\sharp^{1/2} 
\end{equation}
Multiply both sides by $\sigma_\sharp^{-1/2}$ and squaring, we obtain
\begin{equation} \label{eq:SigmaComm}
\sigma_\sharp = \sigma':= \Gamma \left(\left( \sum_{i=1}^{n}  p_i \rho_i^{1/2}\right)^2 \right),
\end{equation}
which we call the \textit{Commuting estimator} $\sigma'$. This expression agrees with results from \cite{Ferrie2016} that deal with the problem of optimal states for fidelity restricted to commuting states. The expression \eqref{eq:SigmaComm} also serves as an heuristic approximation in the general case. To see this, recall that $f(\sigma_\sharp)\cdot\sigma_\sharp =  \sum_{i=1}^{n} p_i \rho_i^{1/2} U_i \sigma_\sharp^{1/2}$. Since $U_j = \exp(iH_j)$ for some Hermitian $H_j \in \operatorname{Herm} (\mathcal H)$, we may write
\begin{equation}
f(\sigma_\sharp) \cdot \sigma_\sharp=  \sum_{j=1}^{n} p_j \rho_j^{1/2} \left(\mathds{1}_\mathcal H  + iH_j +  \frac12(iH_j)^2 + \cdots  \right)  \sigma_\sharp^{1/2}.
\end{equation}
Taking a $0$th order approximation by ignoring all terms expect $\mathds{1}_\mathcal H$, we obtain $\sigma'$. This approximation would make sense when all the states $\rho_i \in  \mathsf R$ are `close by', which is the case for Bayesian state estimation which is the proposed application of these results. 

\subsection{Upper and lower bounds on optimal average fidelity}
We now present two different upper bounds on the maximum average fidelity achievable by any state for an arbitrary ensemble $(\mathsf R, p)$ (that may include rank-deficient states). We call these bounds the \textit{Average bound} and the \textit{Product bound}, respectively. The Average bound states that the square root of average fidelity of the Mean estimator $\sigma_\text{M} = \sum_{i=1}^{n} p_i \rho_i$ bounds the average fidelity obtained by state $\sigma$ from above:
\begin{equation}
    f(\sigma) = \sum_{i=1}^{n} p_i \operatorname{F}(\rho_i,\sigma) \leq \sqrt{f(\sigma_\text{M})}.
\end{equation}
The Product bound states that for any state $\sigma$,
\begin{equation}
    f(\sigma)  \leq \sqrt{\sum_{i,j=1}^{n} p_i p_j \operatorname{F}(\rho_i, \rho_j)}.
\end{equation}
The Product bound is tighter than the Average bound. These statements are formalized below. We first prove the following lemma which deals with the scenario where all the states in the ensemble are full-rank.
\begin{lemma} \label{Lem:ProdBoundFullRank}
Let $\mathsf R = \{ \rho_1, \ldots, \rho_n\} \subset \operatorname{D}(\mathcal H) $ be a collection of full-rank states and $p \in \Delta_n$ be a probability vector. Then for any state $\sigma \in \operatorname{D}(\mathcal H)$, it holds that
\begin{equation} \label{eq:prodboundsfullrank}
    f(\sigma) \leq  \sqrt{\sum_{i=1}^{n} p_ip_j\operatorname{F}(\rho_i, \rho_j)}.
\end{equation}
When all the states in the ensemble commute pairwise, the inequality is saturated. 
\end{lemma}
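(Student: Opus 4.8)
The plan is to linearize $f(\sigma)$ using the variational form of fidelity, apply Cauchy--Schwarz in the Hilbert--Schmidt inner product to factor out $\sigma$, and then bound the leftover cross terms by pairwise fidelities via a second use of the variational property of the trace norm. First I would invoke the polar-decomposition machinery of \eqref{eq:UnitaryFactor}--\eqref{eq:PolDecomp}. Since every $\rho_i$ and $\sigma$ is full rank, for each $i$ there is a unitary $U_i \in \operatorname{U}(\mathcal H)$ (the inverse of the unitary factor in the polar decomposition of $\rho_i^{1/2}\sigma^{1/2}$) with $\operatorname{F}(\rho_i,\sigma) = \operatorname{Tr}(\rho_i^{1/2} U_i \sigma^{1/2})$, the right-hand side being real and nonnegative. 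Summing with weights $p_i$ gives $f(\sigma) = \operatorname{Tr}(M\sigma^{1/2})$, where $M = \sum_{i=1}^n p_i \rho_i^{1/2}U_i$.

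Next I would apply the Hilbert--Schmidt Cauchy--Schwarz inequality $|\operatorname{Tr}(AB)| \le \|A\|_2\,\|B\|_2$, where $\|A\|_2 = \sqrt{\operatorname{Tr}(A^*A)}$ is the Schatten $2$-norm. Using $\|\sigma^{1/2}\|_2^2 = \operatorname{Tr}(\sigma) = 1$ this yields $f(\sigma) \le \|M\|_2\,\|\sigma^{1/2}\|_2 = \|M\|_2$, so it remains to control $\|M\|_2^2 = \operatorname{Tr}(M^*M)$. Expanding,
\[
\operatorname{Tr}(M^*M) = \sum_{i,j=1}^n p_ip_j \operatorname{Tr}\!\left(\rho_i^{1/2}\rho_j^{1/2}U_jU_i^*\right),
\]
and since each $U_j U_i^*$ is unitary, hence a contraction, the variational property of the trace norm \eqref{eq:VarTraceNorm} bounds the modulus of every summand by $\operatorname{Tr}\bigl|\rho_i^{1/2}\rho_j^{1/2}\bigr| = \operatorname{F}(\rho_i,\rho_j)$. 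Because the whole sum equals $\|M\|_2^2$ and is therefore real, passing to real parts term by term gives $\|M\|_2^2 \le \sum_{i,j} p_ip_j \operatorname{F}(\rho_i,\rho_j)$, and taking square roots establishes \eqref{eq:prodboundsfullrank}.

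For the saturation claim I would move to the common eigenbasis in which all $\rho_i$ are simultaneously diagonal with eigenvalue vectors $q^{(i)}$, so that $\operatorname{F}(\rho_i,\rho_j) = \sum_k \sqrt{q^{(i)}_k q^{(j)}_k}$, and take $\sigma = \sigma'$ from \eqref{eq:SigmaComm}, which is diagonal with entries proportional to $t_k^2$ where $t_k = \sum_i p_i\sqrt{q^{(i)}_k}$. A direct computation then gives $f(\sigma') = \sum_k t_k\sqrt{t_k^2/\sum_l t_l^2} = \sqrt{\sum_k t_k^2}$, while $\sum_{i,j}p_ip_j\operatorname{F}(\rho_i,\rho_j) = \sum_k (\sum_i p_i\sqrt{q^{(i)}_k})^2 = \sum_k t_k^2$, so equality holds.

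I would flag that the crux of the argument is the twofold use of the variational characterization: once to collapse $f(\sigma)$ into a single trace $\operatorname{Tr}(M\sigma^{1/2})$ so that Cauchy--Schwarz can strip off $\sigma$, and once to recognize that the Gram matrix $M^*M$ decomposes into cross terms each governed by a pairwise fidelity. The only subtleties to verify carefully are that $U_jU_i^*$ is genuinely a contraction, so that \eqref{eq:VarTraceNorm} applies, and that the imaginary parts of the individual cross terms may be discarded because $\operatorname{Tr}(M^*M)$ is manifestly real and nonnegative.
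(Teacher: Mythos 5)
Your proof is correct, but it follows a genuinely different route from the paper's. The paper derives the bound \emph{through the optimal state}: it invokes the fixed-point equation \eqref{eq:FPOri} (itself a consequence of complementary slackness of the SDP, hence the full-rank hypothesis), squares $f(\sigma_\sharp)\cdot\sigma_\sharp = \sum_i p_i\,\sigma_\sharp^{1/2}\rho_i^{1/2}U_i$, strips off $\sigma_\sharp$ by conjugating with $\sigma_\sharp^{-1/2}$, and takes the trace to get $f(\sigma_\sharp)^2 = \sum_{i,j}p_ip_j\,\Re\operatorname{Tr}\bigl(\rho_i^{1/2}U_iU_j^*\rho_j^{1/2}\bigr) \le \sum_{i,j}p_ip_j\operatorname{F}(\rho_i,\rho_j)$, after which $f(\sigma)\le f(\sigma_\sharp)$ finishes the job. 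You instead bound $f(\sigma)$ for an \emph{arbitrary} $\sigma$ directly: linearize each fidelity with a polar unitary, apply Hilbert--Schmidt Cauchy--Schwarz to peel off $\sigma^{1/2}$, and bound the Gram cross terms $\operatorname{Tr}(M^*M)$ by pairwise fidelities — note that your cross terms $\rho_i^{1/2}\rho_j^{1/2}U_jU_i^*$ are exactly the paper's, so the two arguments converge at the final estimate, but you reach it without any SDP or fixed-point input. This buys you two things. First, your argument never actually uses full rank in an essential way: a (possibly non-unique) polar unitary achieving $\operatorname{F}(\rho_i,\sigma)=\operatorname{Tr}(\rho_i^{1/2}U_i\sigma^{1/2})$ exists for arbitrary states, so your proof establishes Theorem \ref{Thm:ProdBounds} in one stroke and renders the paper's depolarization-and-continuity extension unnecessary. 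Second, your saturation argument exhibits the explicit state $\sigma'$ of \eqref{eq:SigmaComm} attaining the bound by direct computation in the common eigenbasis, whereas the paper argues via $U_i=\mathds{1}_{\mathcal H}$ at the (commuting) optimum; yours is self-contained, since exhibiting any state that meets an upper bound certifies saturation without needing to know that the optimizer commutes with the ensemble. What the paper's route buys in exchange is thematic: it ties the bound to the fixed-point structure of $\sigma_\sharp$ that drives the rest of the paper. The two subtleties you flag — that $U_jU_i^*$ is a contraction so \eqref{eq:VarTraceNorm} applies, and that discarding imaginary parts is licensed by the reality of $\operatorname{Tr}(M^*M)$ — are exactly the right points to verify, and both hold.
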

\begin{proof}
    By \eqref{eq:FPOri}, we have
\begin{equation}
    f(\sigma_\sharp) \cdot \sigma_\sharp = \sum_{i=1}^{n} p_i \sqrt{\sigma_\sharp^{1/2} \rho_i \sigma_\sharp^{1/2}} = \sum_{i=1}^{n}  p_i  \sigma_\sharp^{1/2} \rho ^{1/2}_i U_i.     
\end{equation}
Squaring and then left and right multiplying by $\sigma_\sharp^{-1/2}$, we have
\begin{equation} \label{Eq:ProdBoundStep1}
\begin{aligned}
f(\sigma_\sharp)^2 \cdot \sigma_\sharp  &=  \sigma_\sharp^{-1/2} \left(\sum_{i,j=1}^{n} p_i p_j \sigma_\sharp^{1/2} \rho ^{1/2}_i U_i U_j^* \rho_j ^{1/2} \sigma_\sharp^{1/2}  \right)\sigma_\sharp^{-1/2}  \\ 
&= \sum_{i,j=1}^{n} p_i p_j \rho_i ^{1/2} U_i U_j^* \rho_j ^{1/2}. 
\end{aligned}
\end{equation}
Tracing both sides, we have
\begin{equation} \label{Eq:ProdBoundStep2}
f(\sigma_\sharp)^2 = \text{Tr}\left( \sum_{i,j=1}^{n}  p_i p_j  \rho_i ^{1/2}U_i U_j^* \rho_j ^{1/2} \right) = \sum_{i,j=1}^{n}  p_i p_j  \Re\left(\operatorname{Tr}\left( \rho_i ^{1/2}U_i U_j^* \rho_j ^{1/2} \right)\right).
\end{equation}
Noting that $\Re\left(\operatorname{Tr} \left(\rho_i ^{1/2} V   \rho_j ^{1/2} \right) \right) \leq  \operatorname{F}(\rho_i,\rho_j)$  for any unitary $V \in \operatorname{U}(\mathcal H)$, it follows that
\begin{equation} \label{Eq:ProdBoundStep3}
    f(\sigma_\sharp)^2 \leq \sum_{i=1}^{n} p_i p_j \operatorname{F}(\rho_i,\rho_j).  
\end{equation}
Taking square root over both sides, we obtain \eqref{eq:prodboundsfullrank}.

To see that the inequality is saturated in the commuting case, note that if $\rho_i$ and $\sigma_\sharp$ commute, we have $U_i = \mathds{1}_\mathcal H$. Hence \eqref{Eq:ProdBoundStep1} reduces to
\begin{equation}
\begin{aligned}
    f(\sigma_\sharp)^2 & = \text{Tr}\left( \sum_{i,j=1}^{n}  p_i p_j  \rho_i ^{1/2} \rho_j ^{1/2} \right) \\ 
    &=\sum_{i,j=1}^{n}  p_i p_j  \text{Tr}\left( \rho_i ^{1/2} \rho_j ^{1/2} \right) =\sum_{i, j=1}^{n} p_i p_j \operatorname{F}(\rho_i,\rho_j), 
\end{aligned}
\end{equation}
where, for commuting states $\rho_i$ and $\rho_j$, we have $\operatorname{F}(\rho_i,\rho_j) = \text{Tr}\left( \rho_i^{1/2} \rho_j^{1/2}\right)$. Taking square root across, inequality saturation of~\eqref{eq:prodboundsfullrank} follows.
\end{proof}

Lemma \ref{Lem:ProdBoundFullRank} bounds optimal average fidelity in the case where all the states in the ensemble are full rank. This can be extended to ensembles with arbitrary (such as rank-deficient) states by continuity, which brings us to the Product bound for arbitrary ensemble. 
\begin{theorem} [Product bound] \label{Thm:ProdBounds}
Let $\mathsf R = \{\rho_1, \ldots, \rho_n\} \in \operatorname{D}(\mathcal H) $ be an arbitrary collection of quantum states and let $p \in \Delta_n$ be a probability vector. Then for any state $\sigma \in \operatorname{D}(\mathcal H)$, it holds that
\begin{equation} \label{eq:ProdBounds}
    f(\sigma) \leq  \sqrt{\sum_{i=1}^{n} p_i p_j\operatorname{F}(\rho_i, \rho_j) }.
\end{equation}
\end{theorem}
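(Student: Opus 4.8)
The plan is to reduce the general statement to the full-rank case already established in Lemma~\ref{Lem:ProdBoundFullRank} by a continuity (depolarization) argument. First I would fix an arbitrary test state $\sigma \in \operatorname{D}(\mathcal H)$; since the right-hand side of \eqref{eq:ProdBounds} does not depend on $\sigma$, it suffices to bound $f(\sigma)$ for each such $\sigma$ individually. For a parameter $\epsilon \in (0,1)$, define the depolarized states
\begin{equation}
\rho_i^{(\epsilon)} = (1-\epsilon)\rho_i + \frac{\epsilon}{d}\mathds{1}_\mathcal H, \qquad i = 1, \ldots, n,
\end{equation}
each of which is full rank because the maximally mixed contribution shifts every eigenvalue strictly above zero. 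The perturbed ensemble $\mathsf R^{(\epsilon)} = \{\rho_1^{(\epsilon)}, \ldots, \rho_n^{(\epsilon)}\}$, together with the unchanged probability vector $p$, therefore satisfies the hypotheses of Lemma~\ref{Lem:ProdBoundFullRank}.

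Applying that lemma to $\mathsf R^{(\epsilon)}$ and the fixed test state $\sigma$ gives, for every $\epsilon \in (0,1)$,
\begin{equation}
\sum_{i=1}^{n} p_i \operatorname{F}\!\left(\rho_i^{(\epsilon)}, \sigma\right) \leq \sqrt{\sum_{i,j=1}^{n} p_i p_j \operatorname{F}\!\left(\rho_i^{(\epsilon)}, \rho_j^{(\epsilon)}\right)}.
\end{equation}
I would then take the limit $\epsilon \to 0^+$. Since $\rho_i^{(\epsilon)} \to \rho_i$ in any matrix norm as $\epsilon \to 0$, the key input is the joint continuity of the fidelity $\operatorname{F}(\cdot,\cdot)$ on $\operatorname{D}(\mathcal H) \times \operatorname{D}(\mathcal H)$. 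Granting this, the left-hand side converges to $f(\sigma) = \sum_i p_i \operatorname{F}(\rho_i, \sigma)$ and the argument of the square root on the right converges to $\sum_{i,j} p_i p_j \operatorname{F}(\rho_i, \rho_j)$; because the square root is continuous on $[0,\infty)$, the entire right-hand side converges to the bound claimed in \eqref{eq:ProdBounds}. A non-strict inequality between two continuous quantities is preserved under limits, which yields exactly $f(\sigma) \leq \sqrt{\sum_{i,j} p_i p_j \operatorname{F}(\rho_i,\rho_j)}$.

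The main obstacle is justifying the continuity of fidelity at the \emph{boundary} of $\operatorname{D}(\mathcal H)$, precisely at the rank-deficient states we are trying to accommodate. Writing $\operatorname{F}(\rho,\sigma) = \operatorname{Tr}\sqrt{\sigma^{1/2}\rho\sigma^{1/2}} = \|\rho^{1/2}\sigma^{1/2}\|_1$, continuity follows from the continuity of the positive semidefinite square root composed with the continuity of the trace norm; this is a standard fact (see, e.g.\ \cite{NielsenChuang, Watrous2018}) and can be cited rather than reproved. The only point requiring care is that the perturbation is applied to the $\rho_i$ alone, while $\sigma$ is held fixed and arbitrary (and possibly rank-deficient); since Lemma~\ref{Lem:ProdBoundFullRank} already permits an arbitrary $\sigma$, no perturbation of the test state is needed, and the limiting argument goes through verbatim.
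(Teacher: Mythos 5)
Your proof is correct and uses the same basic strategy as the paper's: depolarize the ensemble to reduce to the full-rank case of Lemma~\ref{Lem:ProdBoundFullRank}, then let $\epsilon \to 0^+$ using continuity of fidelity. There is, however, one genuine difference in how the limit is organized, and your version is the cleaner of the two. The paper defines the gap $J(\epsilon)$ in \eqref{eq:Jeps} in terms of $\sigma_\sharp$, the \emph{optimal} state for the depolarized ensemble $\mathsf{R}_\epsilon$; since that optimizer itself depends on $\epsilon$, the paper's assertion that $J$ is continuous ``as a composition of continuous functions'' is not quite immediate --- it implicitly requires continuity in $\epsilon$ of the optimal average fidelity (equivalently, some maximum-theorem or uniform-convergence argument over the compact set $\operatorname{D}(\mathcal H)$), which the paper does not supply. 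Your proof sidesteps this entirely: because Lemma~\ref{Lem:ProdBoundFullRank} already holds for an \emph{arbitrary} test state $\sigma$, you hold $\sigma$ fixed while only the ensemble is perturbed, so both sides of the inequality are manifestly continuous functions of $\epsilon$ alone, and passing to the limit preserves the non-strict inequality \eqref{eq:ProdBounds} with no hidden dependence on the optimizer. In short, the two arguments buy the same theorem, but yours closes a small gap in the paper's continuity step rather than inheriting it.
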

\begin{proof}
Let $\mathsf {R}_\epsilon $ denote the ensemble obtained by depolarizing all the states in $\mathsf R$ by a factor of $\epsilon \in [0,1]$: 
\begin{equation}
\mathsf R_\epsilon = \{\rho_i'(\epsilon) =( 1-\varepsilon)\rho_i + \varepsilon \mathds{1}_{\mathcal H}/d: \rho_i \in \mathsf R\}.
\end{equation}
Let $J(\epsilon )$ denote the gap of the Product bound when the ensemble is depolarised by a factor of $\epsilon$:  
\begin{equation} \label{eq:Jeps}
J(\epsilon ) = \sqrt{ \sum_{i,j=1}^{n}  p_i p_j \operatorname{F} \left(\rho_i',\rho_j' \right) } - \sum_{i=1}^{n} \operatorname{F}(\rho_i',\sigma_\sharp),
\end{equation}
where $\sigma_\sharp$ is the optimal state over the ensemble $\mathsf R_\epsilon$ and we've dropped the $\epsilon$ while writing $\rho_i'$ for brevity. 

For any $\epsilon \in (0,1]$, all of the states in $\mathsf R_\epsilon$ are full rank and thereby $J(\epsilon ) \geq0$ (as the Product bound holds for full-rank states by Lemma \ref{Lem:ProdBoundFullRank}). We are interested in the case where $\epsilon = 0$, which implies $\mathsf R_\epsilon = \mathsf R$, thereby the ensemble can now contain rank-deficient states.

Note that both the terms in RHS of \eqref{eq:Jeps} are continuous functions of $\epsilon$  (for $\epsilon \in [0,1]$) as they're both compositions of continuous functions ($\rho_i'$ is continuous in $\epsilon$  and fidelity is continuous in its arguments). Hence we have $J(\epsilon )$ to be continuous in $\epsilon$. 

By noting that $J(\epsilon )$ is continuous in $\epsilon$ along with the fact that $J(\epsilon) \geq 0$ for  $\epsilon \in (0, 1]$, we conclude that $J(\epsilon = 0) \geq 0$. Equivalently, Product bound holds for arbitrary ensembles (that may include rank-deficient states).
\end{proof}

\begin{remark}
For an ensemble $(\mathsf R, p)$, the average fidelity of the Commuting estimator $\sigma'$~\eqref{eq:SigmaComm} and the Product bound~\eqref{eq:ProdBounds} are lower and upper bounds to optimal average fidelity respectively:
\begin{equation}
    f(\sigma') \leq f(\sigma_\sharp) \leq \sqrt{\sum_{i=1}^{n} p_i p_j \operatorname{F}(\rho_i,\rho_j) }.
\end{equation}
These bounds coincide when all the states in the ensemble $\mathsf R$ commute pairwise.
\end{remark}

We now derive the \textit{average bound}, which is a consequence of joint concavity of fidelity.
\begin{corollary}
Let $\mathsf R = \{\rho_1, \ldots, \rho_n\} \in \operatorname{D}(\mathcal H)$ be an arbitrary collection of quantum states, $p \in \Delta_n$ be a probability vector, and $\sigma_{\operatorname{M}} = \sum_{i=1}^{n} p_i \rho_i$ be the Mean estimator. Then for any state $\sigma \in \operatorname{D}(\mathcal H)$, it holds that
\begin{equation} \label{eq:averagebound}
    f(\sigma) \leq  \sqrt{\sum_{i=1}^{n} p_i p_j\operatorname{F}(\rho_i, \rho_j) } \leq \sqrt{f(\sigma_{\operatorname{M}})}.
\end{equation}
\end{corollary}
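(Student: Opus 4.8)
The plan is to recognize that the first inequality in \eqref{eq:averagebound} is precisely the Product bound already established in Theorem~\ref{Thm:ProdBounds}, so the entire content of this corollary lies in the second inequality,
\begin{equation*}
\sqrt{\sum_{i,j=1}^{n} p_i p_j \operatorname{F}(\rho_i,\rho_j)} \leq \sqrt{f(\sigma_{\operatorname{M}})}.
\end{equation*}
Since both sides are nonnegative, I would first square both sides and reduce the claim to the equivalent statement
\begin{equation*}
\sum_{i,j=1}^{n} p_i p_j \operatorname{F}(\rho_i,\rho_j) \leq f(\sigma_{\operatorname{M}}) = \sum_{i=1}^{n} p_i \operatorname{F}(\rho_i,\sigma_{\operatorname{M}}),
\end{equation*}
where I have expanded $f(\sigma_{\operatorname{M}})$ using its definition and the definition $\sigma_{\operatorname{M}} = \sum_{j=1}^n p_j \rho_j$.

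The key tool is the joint concavity of fidelity, a standard property (see e.g.\ \cite{NielsenChuang,Watrous2018}), from which concavity in the second argument alone follows: for each fixed $i$,
\begin{equation*}
\operatorname{F}\!\left(\rho_i, \sum_{j=1}^{n} p_j \rho_j\right) \geq \sum_{j=1}^{n} p_j \operatorname{F}(\rho_i,\rho_j).
\end{equation*}
I would then multiply this inequality by $p_i \geq 0$ and sum over $i$. On the left-hand side the inner sum is exactly $\sigma_{\operatorname{M}}$, yielding $\sum_i p_i \operatorname{F}(\rho_i,\sigma_{\operatorname{M}}) = f(\sigma_{\operatorname{M}})$, while the right-hand side collapses to the double sum $\sum_{i,j} p_i p_j \operatorname{F}(\rho_i,\rho_j)$. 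This establishes the squared inequality, and taking square roots across recovers the Average bound. Chaining this with the Product bound from Theorem~\ref{Thm:ProdBounds} gives the full chain \eqref{eq:averagebound}.

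I do not anticipate a genuine obstacle here, as the argument is essentially a one-line application of concavity; the only point requiring slight care is that I invoke concavity in the second slot only (not full joint concavity), and that the weights $p_j$ form a probability vector so that the convex-combination argument of the concavity inequality applies verbatim with $\sigma_{\operatorname{M}}$ as the convex combination. It is worth remarking that equality in the concavity step, and hence in the Average bound, is attained in the commuting case, consistent with the saturation already noted for the Product bound in Lemma~\ref{Lem:ProdBoundFullRank}.
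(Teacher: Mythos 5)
Your proposal is correct and follows essentially the same route as the paper: the paper also observes that the first inequality is the Product bound of Theorem~\ref{Thm:ProdBounds}, and proves the second by applying the (joint) concavity of fidelity to $f(\sigma_{\operatorname{M}}) = \sum_i p_i \operatorname{F}\bigl(\rho_i, \sum_j p_j \rho_j\bigr) \geq \sum_{i,j} p_i p_j \operatorname{F}(\rho_i,\rho_j)$, then takes square roots and chains the two bounds. Your slightly more explicit bookkeeping (concavity in the second slot, weighting by $p_i$ and summing) matches the paper's argument in substance.
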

\begin{proof}
We have
\begin{equation}    \label{eq:averageboundderivation}
\begin{aligned}
f(\sigma_\text{M}) &= \sum_{i=1}^{n} p_i \operatorname{F}(\rho_i,\sigma_\text{M}) \\
&= \sum_{i=1}^{n} p_i \operatorname{F} \left(\rho_i, \sum_{j=1}^{n} p_j \rho_j \right) \\ 
&\geq \sum_{i=1}^{n} p_i p_j  \operatorname{F}(\rho_i, \rho_j),
\end{aligned}
\end{equation}
where the inequality comes from the (joint) concavity of fidelity~\cite[Property 9.2.2]{Wilde2013}. Combining \eqref{eq:ProdBounds} and \eqref{eq:averageboundderivation} and taking square root across, and noting that $f(\sigma) \leq f(\sigma_\sharp)$ for any state $\sigma \in \operatorname{D}(\mathcal H)$, we obtain \eqref{eq:averagebound}.
\end{proof}

\section{Application: tomography} \label{sec:applications}
 In quantum tomography, we aim to reconstruct the state of a quantum system from measurement outcomes. There exists a multitude of different tomography methods such as linear inversion~\cite{DAriano2007}, projected least square~\cite{Guta2020}, maximum likelihood estimation~\cite{Hradil97, Rehacek2001}, hedged maximum likelihood estimation~\cite{Blume2010HMLE}, compressed sensing tomography~\cite{Gross2010, Flammia2012, Kliesch2016}, and Bayesian quantum tomography~\cite{Blume2010BME, Huszar2012, Granade2016}. We now discuss how our results find application in Bayesian quantum tomography.

In tomography or state estimation, we are presented with \textit{measurement data} $\mathsf D = (E_k, m_k)_k$. Each $E_k$ is part of a POVM with $\sum_{k} E_k = \mathds{1}_{\mathcal H}$ and appears in the measurement data $m_k$ times with $m = \sum_k m_k$ denoting the total number of measurements done. The likelihood of obtaining the data $\mathsf D$ given any true state $\rho$ is then computed as 
\begin{equation}
 \mathcal L(\mathsf D|\rho) = \operatorname{Pr}(\mathsf D|\rho) = \frac{m!}{\prod_k (m_k!)} \prod_k \operatorname{Tr}(E_k \rho)^{m_k}.
\end{equation}
In maximum likelihood estimation, we are concerned with finding the state that maximizes this likelihood function:
\begin{equation}
    \sigma_{\text{MLE}} = \argmax_{\sigma \in \operatorname{D}(\mathcal H)} \mathcal L(\mathsf D|\sigma).
\end{equation}
In practical  Bayesian state estimation~\cite{granade_qinfer_2017}, we instead begin with a collection of states $\mathsf R = \{\rho_i\}_{i=1}^n$ called \textit{particles} and a prior distribution $u \in \Delta_n$ over them. Usually the prior is taken to be as uniform as possible, but our results work for arbitrary distributions so this is not a concern here. 
Using Bayes' rule we then compute the posterior distribution $p \in \Delta_n$ as 
\begin{equation}
    p_i \propto u_i \mathcal L(\mathsf D|\rho_i),
\end{equation}
with the probabilities $p_i$ being normalised afterwards. Once we have the posterior distribution, the Bayes estimator is the state which maximizes the posterior average fidelity over the ensemble $(\mathsf R, p)$. 

The Bayes estimator is then reported as an estimate for the true state which generated the measurement data.
The main results of this work tell us, given the posterior distribution, how to compute the Bayes estimator for fidelity, heuristic approximations for it, and bounds for the maximum average fidelity.

%In this article, we have presented algorithms to compute the quantum state that maximizes average fidelity over a finite ensemble of quantum states. We first formulated the optimization problem as a semidefinite program and then, from complementary slackness relations, derived an equation satisfied by the optimal state. This equation leads to fixed point algorithms to compute the optimal state which is much faster than solving the semidefinite program. We then derive an analytic expression for an approximation to the optimal state, which is exact in the scenario where all the states in the ensemble commute pairwise. Finally, derive upper and lower bounds for maximum average fidelity achieved by any state which are of theoretical and practical interest. 

%We then present numerical simulations to compare the performance and quality of the algorithms. We discuss how the problem readily applies to Bayesian quantum tomography. 
%We note that computing the approximate optimal estimator is no more difficult than computing average fidelity for any state. Therefore, it is a provably better estimator than the Mean estimator to represent an ensemble of states while being easy on computational resources. If one wishes to obtain the optimal estimator, the fixed point algorithms are seen to converge within very few steps for any initial state. However, the Commuting estimator remains a well-motivated initial state for the iterative algorithm.

\section{Numerical experiments} \label{Sec:Numerics}
We consider four different numerical experiments in this work. First, we look at the performance (runtime) of the different methods to obtain the optimal state over random distributions of full-rank states (Fig.~\ref{Fig:SDPvFP}). We then simulate Bayesian tomography and showcase how our results can improve on state-of-the-art methods (Fig.~\ref{Fig:LambdaTomo}). Finally, we numerically demonstrate the relative tightness of the bounds we derived (Fig.~\ref{Fig:Bounds}). In Appendix~\ref{app:AddNumerics}, we compare the performance of two fixed-point methods $\Lambda$ and $\Omega$~(Fig.~\ref{Fig:FPCompare}).

We consider the fixed-point methods to have converged (i.e., the stopping condition) if the spectral norm of the difference between two consecutive iterations is less than some tolerance $\epsilon$. In Fig.~\ref{Fig:SDPvFP}, we choose $\epsilon = 10^{-4}$. 

We use CVXPY~\cite{CVXPY1, CVXPY2} to solve SDPs numerically. The numerics were done on Google Colab (single-core CPU at 2.3 GHz and approximately 12 GB RAM). The code is available on GitHub.\footnote{\url{https://github.com/afhamash/optimal-average-fidelity}}

\subsection{Comparing performance of SDP and Omega FP algorithm for optimal average fidelity}

The semidefinite program as defined in Definition \ref{def:SDP} can be solved numerically using standard convex optimization libraries. However, owing to the size of the matrices over which we optimize --- positive semidefinite matrices of size $(n+1)d \times (n+1)d$ --- the process is quite intractable for even moderately large $n$ and $d$. This intractability can be partially resolved by defining an \emph{alternate SDP} for optimal average fidelity which reformulates the problem into solving $n$ SDPs over matrices of size $2d \times 2d$ (see Appendix~\ref{app:AltSDP}). The Alternate SDP provides a more favorable scaling as the number of states $n$ in the ensemble increases. Moreover, the runtime scales (roughly) linear in the number of states as compared to superlinear for the original SDP. 

As seen in Fig. \ref{Fig:SDPvFP}, the $\Omega$ fixed point method vastly outperforms both SDPs in terms of runtime. Though both the alternate SDP and FP method scale linearly in $n$, The FP method is orders of magnitude faster. % performs orders of magnitude better in terms of runtime. 
In particular, for 5 qubit states ($d = 32$), the FP method was faster than alternate SDP by a factor of 68 on average. The plots also show that the time taken for the FP method to obtain a solution in the 5 qubit scenario is comparable to the time it takes the alternate SDP to solve the 1 qubit case. The figure also shows how intractable solving the original SDP can be. For even just 1 qubit ($d=2$) and $n=20$ states, it takes more time than the FP method takes for 5 qubits and $n = 20$. Moreover, the FP method can be easily parallelized at each iteration, as the $n$ different terms~(see Eq.~\eqref{eq:Omega}) in the sum can be computed in parallel, thereby further boosting performance.

\subsection{Simulating Bayesian tomography}

\begin{figure*}[t]
    \center
    \includegraphics[scale = 0.535]{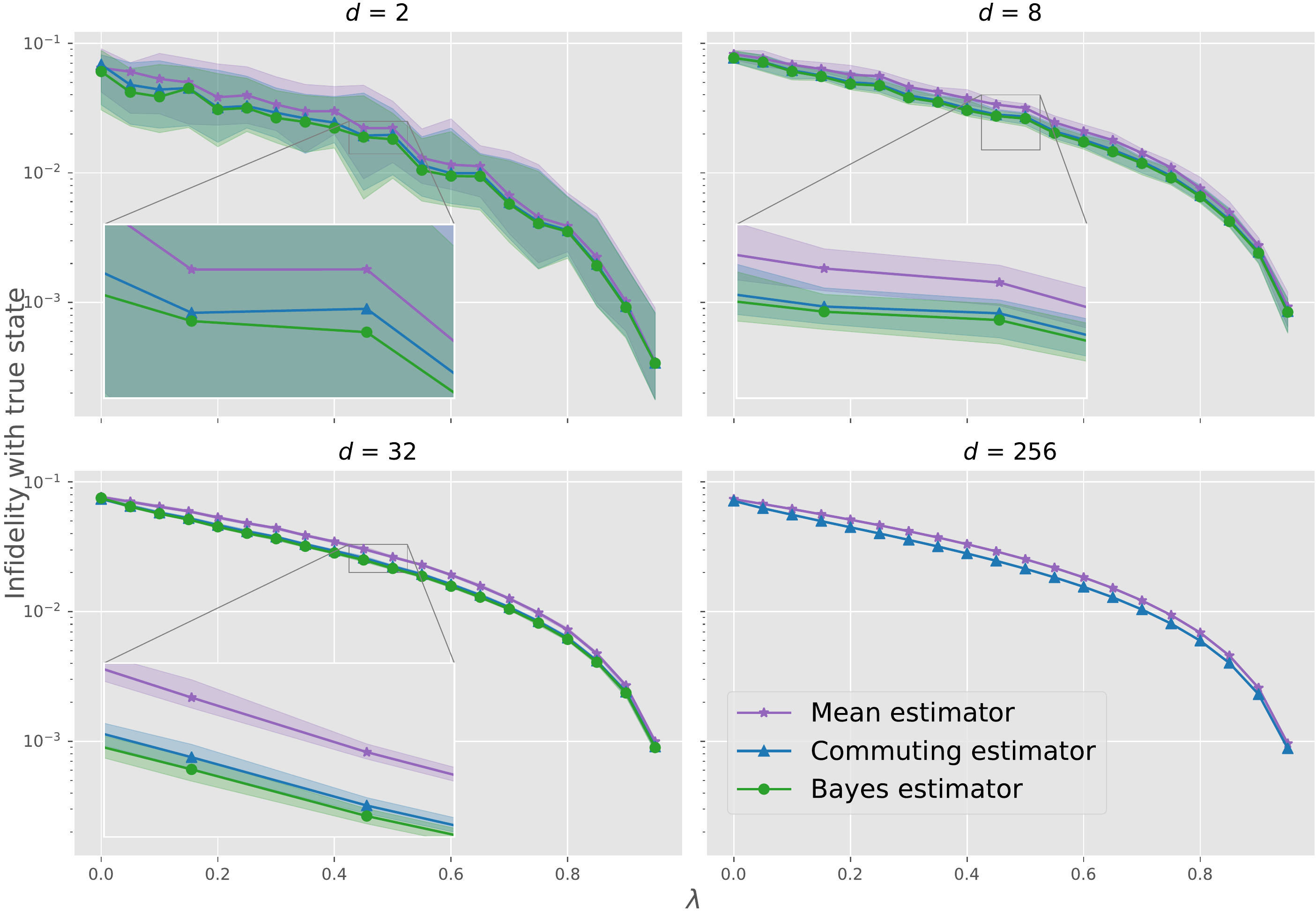}
    \caption{Infidelity with true state as a function of $\lambda$ for various dimensions ($n = 20$) plotted in log scale. $\lambda \in [0,1]$ is a parameter to simulate measurement with $\lambda = 0$ indicating 0 measurements and $\lambda = 1$ indicating infinitely many measurements. Each point is the median of 50 iterations. Interquartile regions are shaded. As seen in the inset plots, the Commuting estimator can serve as a high quality proxy (better than the Mean estimator) to the Bayes estimator while being inexpensive to compute.} \label{Fig:LambdaTomo}
\end{figure*}  

In the second set of experiments, we simulate Bayesian tomography. The results are presented in Fig.~\ref{Fig:LambdaTomo}. Since simulating measurement and computing the posterior distribution is expensive and outside the scope of this paper, we assign posterior weights as follows. We begin with a randomly generated \emph{true state} $\rho_\text{T}$. In Bayesian state estimation, we begin with a set of hypothesis states and an associated distribution. The posterior distribution would be peaked near the true state and as we increase the number of measurements, the sharpness of the peak increases. 
To simulate this, we introduce a parameter $\lambda \in [0,1]$, and then for randomly generated $\{\rho_i'\}_{i=1}^n$, we choose our particles as $\rho_i = \lambda \rho_\text{T} +(1 - \lambda) \rho_i'$. As $\lambda \to 1$, the particles $\{\rho_i'\}_{i=1}^n$ are closer to the true state $\rho_\text{T}$. The unnormalized weights are then chosen as $\operatorname{F}(\rho_i, \rho_\text{T})$, which is then normalized to serve as the weights $p_i$:
\begin{equation}
    p_i \propto \operatorname{F}(\rho_i, \rho_\text{T}), \quad \sum_{i=1}^{n} p_i = 1. 
\end{equation}

This allows assignment of higher weights to $\rho_i$s that are closer to the true state $\rho_\text{T}$. We then compute the Mean estimator $\sigma_\text{M} = \sum_{i=1}^{n} p_i\rho_i$, the Commuting estimator $\sigma' = \Gamma\left(\left(\sum_{i=1}^{n} p_i \rho_i^{1/2}\right)^2\right)$ and the Bayes estimator (optimal estimator) $\sigma_\sharp$. In tomography, one is usually interested in the behaviour of infidelity with the true state as a function of the number of measurements. To simulate this, we vary $\lambda$ from 0 to 1, as $\lambda = 0$ would correspond to the hypothesis states being randomly initialized states (zero measurements) and $\lambda = 1$ would correspond to all the particles being equal to the true state (infinitely many measurements). Fig.~\ref{Fig:LambdaTomo} plots the infidelity of the Bayes estimator, Commuting estimator, and Mean estimator with the true state for various dimensions and $n = 20$. 

For higher dimensions, we drop the Bayes estimator due to the computational costs while noting that the Commuting estimator $\sigma'$ remains a good alternative while being inexpensive to compute.

\subsection{Tightness of upper bounds and other estimators with optimum}

\begin{figure*}
    \center
    \includegraphics[scale = 0.495]{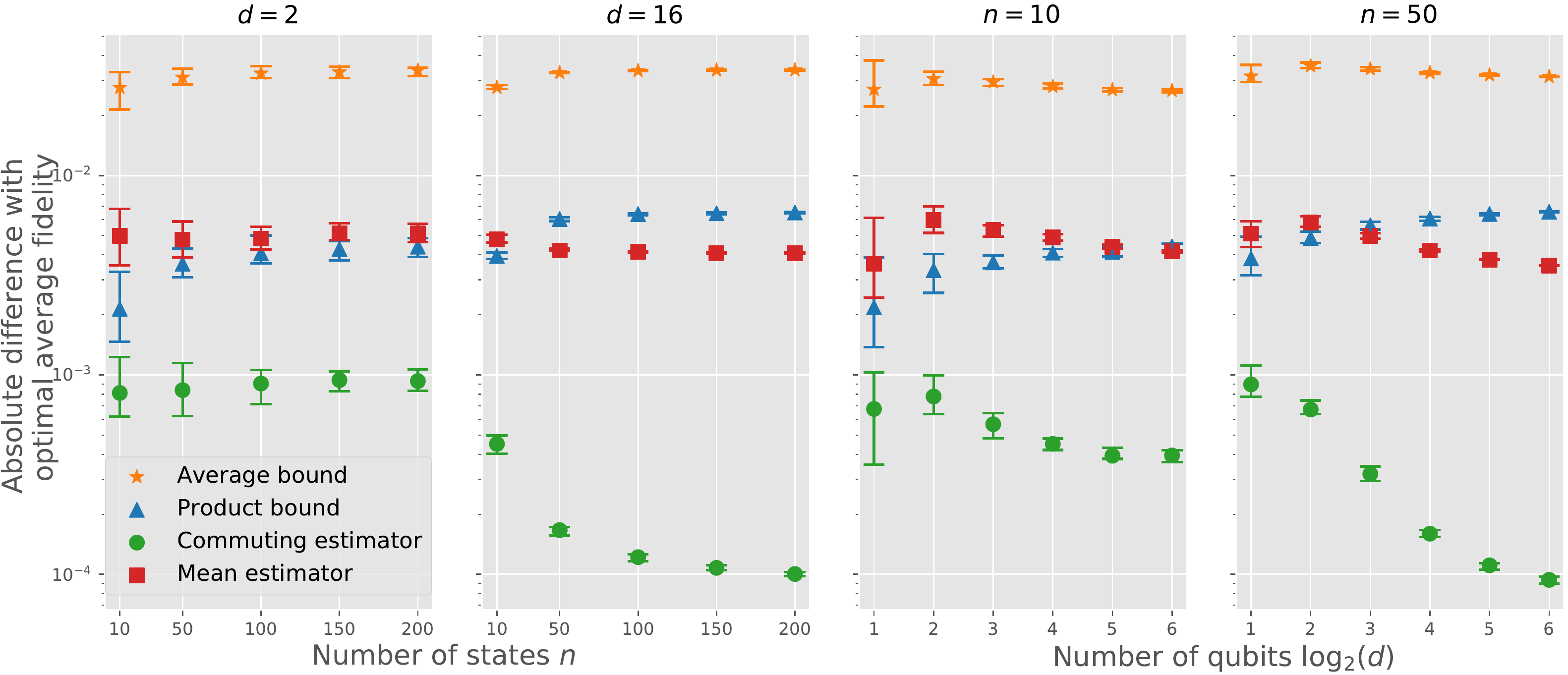}
    \caption{Tightness of quantities of interest with optimal average fidelity as a function of (a) number of states $n$ and (b) number of qubits ($\log_2(d)$). We plot the \emph{absolute difference} with the optimal average fidelity of (i) Average bound~\eqref{eq:averagebound}, (ii) Product bound~\eqref{eq:ProdBounds}, (iii)~average fidelity of Commuting estimator~\eqref{eq:SigmaComm}, and (iv)~average fidelity of Mean estimator $\sigma_{\text{M}} = \sum_{i=1}^{n} p_i \rho_i$. Note that Average bound and Product bound are upper bounds and average fidelities of Mean estimator and Commuting estimator are lower bounds. Each data point is the median of 50 iterations and ticks correspond to interquartile regions. See Eq. \eqref{eq:PlotBounds} for details on the quantities being plotted.} \label{Fig:Bounds}
\end{figure*} 

In Fig.~\ref{Fig:Bounds}, we numerically demonstrate the tightness of the two upper bounds and the fidelity achieved by the Commuting estimator and Mean estimator with the maximum average fidelity for randomly generated ensembles for various dimensions and number of states. More formally, for randomly generated ensembles $(\mathsf R, p)$, we plot the quantity |$f(\sigma_\sharp) - g$|, where
\begin{equation} \label{eq:PlotBounds}
    g = 
    \begin{cases}
     \sqrt{f(\sigma_\text{M})} & \text{(Average bound)}, \\
     \sqrt{\sum_{i=1}^{n} p_i p_j \operatorname{F}(\rho_i,\rho_j)} & (\text{Product bound}), \\
     f(\sigma') & (\text{Commuting estimator}), \\
     f(\sigma_\text{M}) & (\text{Mean estimator}). \\
    \end{cases}   
\end{equation}
Note that the first two are upper bounds while the last two are lower bounds on optimal average fidelity $f(\sigma_\sharp)$. Since the average fidelity of the Mean estimator and Product bounds are different kinds of bounds (lower and upper respectively), their crossing is not unexpected, and it simply means that the lower bound gets closer to the optimum than the upper bound. As the plots show, the Product bound and average fidelity of the Commuting estimator are quite close to the optimal average fidelity.

\section{Conclusion} \label{Sec:Conclusion}
In this work, we present algorithms for identifying states that maximize average fidelity over arbitrary finite ensembles of quantum states. We have constructed semidefinite programs which solve this problem, from which we derive faster fixed-point algorithms for the scenario where all the states in the ensemble are full rank. The fixed point methods are orders of magnitude faster than the semidefinite programs. We also derive heuristic approximations for the optimal state which are exact when the states in the ensemble commute pairwise. Furthermore, we derive novel upper and lower bounds for maximum average fidelity achievable by any quantum state, which are saturated when all the states commute pairwise. Finally, we present numerical experiments to complement our theoretical findings. These results solve open problems in Bayesian quantum tomography and are of independent theoretical interest.

The general technique employed in this work --- constructing a semidefinite program and deriving a fixed-point equation from its complementary slackness relations --- may not directly translate to other figures of merit like square fidelity and trace distance. However, it would be valuable to study how to extend these ideas to maximize averages of other popular figures of merit such as square fidelity and trace distance. 

\section{Acknowledgements}

We thank Afrad Basheer and Daniel Burgarth for their helpful discussions. This work was supported by the Australian Department of Industry, Innovation and Science (Grant No. AUSMURI000002). AA was supported by Sydney Quantum Academy.

\printbibliography

@article{granade_qinfer_2017,
	title = {{QInfer}: {Statistical} inference software for quantum applications},
	volume = {1},
	shorttitle = {{QInfer}},
	doi = {10.22331/q-2017-04-25-5},
	abstract = {Christopher Granade, Christopher Ferrie, Ian Hincks, Steven Casagrande, Thomas Alexander, Jonathan Gross, Michal Kononenko, and Yuval Sanders,
Quantum 1, 5 (2017).
Characterizing quantum systems through experimental data is critical to applications as diverse as metrology and quantum computing. Analyzing this experimental data in a robust and reproduci…},
	language = {en-GB},
	journal = {Quantum},
	author = {Granade, Christopher and Ferrie, Christopher and Hincks, Ian and Casagrande, Steven and Alexander, Thomas and Gross, Jonathan and Kononenko, Michal and Sanders, Yuval},
	month = {4},
	year = {2017},
	note = {Publisher: Verein zur Förderung des Open Access Publizierens in den Quantenwissenschaften},
	pages = {5}
}

@article{Kueng2015,
  title={Near-optimal quantum tomography: estimators and bounds},
  author={Kueng, Richard and Ferrie, Christopher},
  journal={New Journal of Physics},
  volume={17},
  number={12},
  pages={123013},
  year={2015},
  publisher={IOP Publishing},
  doi={10.1088/1367-2630/17/12/123013}
%   \href{https://doi.org/10.1088/1367-2630/17/12/123013}
}

@article{Bhatia2019,
  title={On the Bures--Wasserstein distance between positive definite matrices},
  author={Bhatia, Rajendra and Jain, Tanvi and Lim, Yongdo},
  journal={Expositiones Mathematicae},
  volume={37},
  number={2},
  pages={165--191},
  year={2019},
  publisher={Elsevier},
  doi = {10.1016/j.exmath.2018.01.002}
}

@misc{Ferrie2016,
      title={Bayes estimator for multinomial parameters and Bhattacharyya distances}, 
      author={Christopher Ferrie and Robin Blume-Kohout},
      year={2016},
      eprint={1612.07946},
      archivePrefix={arXiv},
      primaryClass={math.ST}
}

@book{Watrous2018, 
      place={Cambridge}, 
      title={The Theory of Quantum Information}, 
      DOI={10.1017/9781316848142}, 
      publisher={Cambridge University Press}, 
      author={Watrous, John}, 
      year={2018}
}

@book{Hornmatrix, 
      place={Cambridge}, title={Matrix Analysis}, DOI={10.1017/CBO9780511810817}, publisher={Cambridge University Press}, author={Horn, Roger A. and Johnson, Charles R.}, year={1985}}

@book{NielsenChuang, place={Cambridge}, title={Quantum Computation and Quantum Information: 10th Anniversary Edition}, DOI={10.1017/CBO9780511976667}, publisher={Cambridge University Press}, author={Nielsen, Michael A. and Chuang, Isaac L.}, year={2010}}

@article{Hradil97,
  title = {Quantum-state estimation},
  author = {Hradil, Z.},
  journal = {Phys. Rev. A},
  volume = {55},
  issue = {3},
  pages = {R1561--R1564},
  numpages = {0},
  year = {1997},
  month = {3},
  publisher = {American Physical Society},
  doi = {10.1103/PhysRevA.55.R1561}
}

@book{Wilde2013, 
    place={Cambridge}, 
    title={Quantum Information Theory}, 
    DOI={10.1017/CBO9781139525343}, 
    publisher={Cambridge University Press}, 
    author={Wilde, Mark M.}, 
    year={2013}}

@article{Bagan2006,
  title = {Optimal full estimation of qubit mixed states},
  author = {Bagan, E. and Ballester, M. A. and Gill, R. D. and Monras, A. and Mu\~noz-Tapia, R.},
  journal = {Phys. Rev. A},
  volume = {73},
  issue = {3},
  pages = {032301},
  numpages = {18},
  year = {2006},
  month = {03},
  publisher = {American Physical Society},
  doi = {10.1103/PhysRevA.73.032301}
}

@article{Blume2010BME,
	doi = {10.1088/1367-2630/12/4/043034},
	year = 2010,
	month = {04},
	publisher = {{IOP} Publishing},
	volume = {12},
	number = {4},
	author = {Robin Blume-Kohout},
	title = {Optimal, reliable estimation of quantum states},
	journal = {New Journal of Physics},
}

@article{Blume2010HMLE,
  title = {Hedged Maximum Likelihood Quantum State Estimation},
  author = {Blume-Kohout, Robin},
  journal = {Phys. Rev. Lett.},
  volume = {105},
  issue = {20},
  numpages = {4},
  year = {2010},
  month = {11},
  publisher = {American Physical Society},
  doi = {10.1103/PhysRevLett.105.200504}
}

@book{
    Boyd2004, 
    place={Cambridge}, 
    title={Convex Optimization}, 
    DOI={10.1017/CBO9780511804441}, 
    publisher={Cambridge University Press}, 
    author={Boyd, Stephen and Vandenberghe, Lieven}, 
    year={2004}
}

@book{Barvinok2002,
  title={A course in convexity},
  author={Barvinok, Alexander},
  volume={54},
  year={2002},
  publisher={American Mathematical Soc.},
  DOI = {10.1090/gsm/054}
}

@incollection{BhatiaPD,
  title={Positive definite matrices},
  author={Bhatia, Rajendra},
  booktitle={Positive Definite Matrices},
  year={2009},
  DOI={10.1515/9781400827787},
  publisher={Princeton university press}
}

@article{Gross2010,
  title = {Quantum State Tomography via Compressed Sensing},
  author = {Gross, David and Liu, Yi-Kai and Flammia, Steven T. and Becker, Stephen and Eisert, Jens},
  journal = {Phys. Rev. Lett.},
  volume = {105},
  issue = {15},
  pages = {150401},
  numpages = {4},
  year = {2010},
  month = {10},
  publisher = {American Physical Society},
  doi = {10.1103/PhysRevLett.105.150401}
}

@article{Flammia2012,
	doi = {10.1088/1367-2630/14/9/095022},
	year = 2012,
	month = {9},
	publisher = {{IOP} Publishing},
	volume = {14},
	number = {9},
	pages = {095022},
	author = {Steven T Flammia and David Gross and Yi-Kai Liu and Jens Eisert},
	title = {Quantum tomography via compressed sensing: error bounds, sample complexity and efficient estimators},
	journal = {New Journal of Physics}
}

@article{Huszar2012,
  title = {Adaptive Bayesian quantum tomography},
  author = {Husz\'ar, F. and Houlsby, N. M. T.},
  journal = {Phys. Rev. A},
  volume = {85},
  issue = {5},
  pages = {052120},
  numpages = {5},
  year = {2012},
  month = {5},
  publisher = {American Physical Society},
  doi = {10.1103/PhysRevA.85.052120},
}

@article{Granade2016,
	doi = {10.1088/1367-2630/18/3/033024},
  	year = 2016,
	month = {3},
  	publisher = {{IOP} Publishing},
  	volume = {18},
  	number = {3},
  	pages = {033024},
  	author = {Christopher Granade and Joshua Combes and D G Cory},
  	title = {Practical Bayesian tomography},
  	journal = {New Journal of Physics}
}

@article{Kliesch2016,
	doi = {10.1109/tit.2016.2606500},
  	year = 2016,
	month = {12},
  	publisher = {Institute of Electrical and Electronics Engineers ({IEEE})},
  	volume = {62},
  	number = {12},
	pages = {7445--7463},
  	author = {Martin Kliesch and Richard Kueng and Jens Eisert and David Gross},
    title = {Improving Compressed Sensing With the Diamond Norm},
    journal = {{IEEE} Transactions on Information Theory}
}

@article{Rehacek2001,
  title = {Iterative algorithm for reconstruction of entangled states},
  author = {Rehácek, J. and Hradil, Z. and Ježek, M.},
  journal = {Phys. Rev. A},
  volume = {63},
  issue = {4},
  pages = {040303},
  numpages = {4},
  year = {2001},
  month = {3},
  publisher = {American Physical Society},
  doi = {10.1103/PhysRevA.63.040303}
}

@article{Guta2020,
	doi = {10.1088/1751-8121/ab8111},
	year = 2020,
	month = {4},
	publisher = {{IOP} Publishing},
	volume = {53},
	number = {20},
	pages = {204001},
	author = {M Gu{\c{t}}{\u{a}} and J Kahn and R Kueng and J A Tropp}
}

@article{DAriano2007,
  title = {Optimal Data Processing for Quantum Measurements},
  author = {D'Ariano, G. M. and Perinotti, P.},
  journal = {Phys. Rev. Lett.},
  volume = {98},
  issue = {2},
  pages = {020403},
  numpages = {4},
  year = {2007},
  month = {1},
  publisher = {American Physical Society},
  doi = {10.1103/PhysRevLett.98.020403}
}

@article{CVXPY1,
  author  = {Steven Diamond and Stephen Boyd},
  title   = {{CVXPY}: {A} {P}ython-embedded modeling language for convex optimization},
  journal = {Journal of Machine Learning Research},
  year    = {2016},
  volume  = {17},
  number  = {83},
  pages   = {1--5},
  doi = {10.48550/arxiv.1603.00943}
}

@article{CVXPY2,
author = {Akshay Agrawal and Robin Verschueren and Steven Diamond and Stephen Boyd},
title = {A rewriting system for convex optimization problems},
journal = {Journal of Control and Decision},
volume = {5},
number = {1},
year  = {2018},
publisher = {Taylor & Francis},
doi = {10.1080/23307706.2017.1397554}
}

@article{You2021,
         title = {Minimizing Quantum Renyi Divergences via Mirror Descent with Polyak Step Size},
         author = {Jun-Kai You and Hao-Chung Cheng and Yen-Huan Li },
         year = {2021},
         eprint = {2109.06054v2},
         archivePrefix = {arXiv},
         primaryClass ={cs.IT}
        }

\clearpage

\appendix

\section{Alternate semidefinite program for optimal fidelity} \label{app:AltSDP}
    A more numerically tractable SDP (which is still not as tractable as the FP algorithm) to solve the maximization problem \eqref{eq:maximprob} can be constructed as follows. Let $\{\rho_i\}_{i=1}^n \subset \operatorname{D}(\mathcal H)$  be a collection of states. The alternate SDP for optimal fidelity is formulated as $n$ different Watrous SDPs (see Eq.\eqref{SDP:FidSDP}), with the constraint that the matrix variable $\sigma$ is the same in each of the $n$ SDPs. That is,

\begin{equation} \label{eq:AltSDP}
    \begin{aligned}
    &\text{Primal problem}\\
    \text{maximize : } & \frac12 \sum_{i=1}^{n} p_i \operatorname{Tr}(X_i +X_i^*) = \sum_{i=1}^{n} p_i\Re(\operatorname{Tr}(X_i))\\
    \text{subject to : } & \begin{pmatrix}    \rho_i & X_i \\
        X_i^* & \sigma\end{pmatrix} \geq 0, \quad \text{ for all } i \in \{1,\ldots,n\}.  
    \end{aligned}
\end{equation}

This SDP also achieves the primal optimum $\max _{\sigma \in \operatorname{D}(\mathcal H) }f(\sigma)$ while being more numerically tractable. We compare the performance of this SDP along with the original SDP and the FP algorithm in Fig.~\ref{Fig:SDPvFP}.

\section{Runtime comparison of fixed point algorithms} \label{app:AddNumerics}
\begin{figure*}[t]
    \center
    \includegraphics[scale = 0.55]{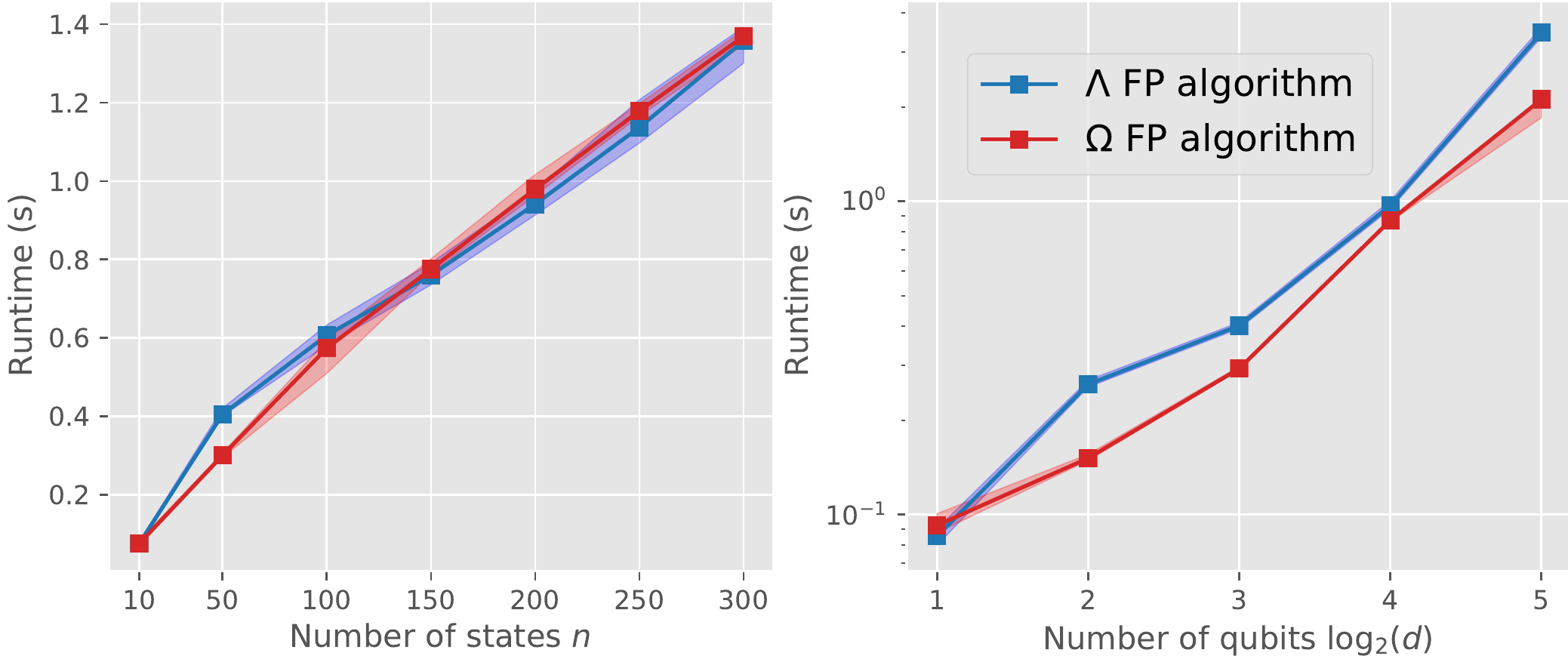}
    \caption{Runtime comparison of $\Lambda$ FP algorithm~\eqref{eq:Lambda} and $\Omega$ FP algorithm~\eqref{eq:Omega} as a function of (a) Number of states for $d = 8$ and (b) Number of qubits ($\log_2(d)$) for $n = 50$. The $\Omega$ FP algorithm offers a superior scaling of runtime as compared to $\Lambda$ FP algorithm. Each data point is the median of 50 runs and interquartile regions are shaded.} \label{Fig:FPCompare}
\end{figure*}

When all the states in the ensemble are full rank, numerically both the FP algorithms are seen to converge to the optimal state for any starting point. We now turn to numerics to compare the performance (runtime) of the two fixed point methods. As seen in Fig.~\ref{Fig:FPCompare}, the runtime performance of $\Omega$ fixed point algorithm~\eqref{eq:Omega} and $\Lambda$ fixed point algorithm~\eqref{eq:Lambda} are comparable. Here we choose stopping tolerance $\epsilon = 10^{-5}$. Since the convergence of the $\Omega$ fixed-point algorithm is theoretically guaranteed, it should be preferred over $\Lambda$ fixed-point algorithm even though the former has a more complicated form than the latter.

\end{document}